\theoremstyle{plain}
\newtheorem{thm}{Theorem}[section]
\newtheorem{lem}[thm]{Lemma}
\newtheorem{prop}[thm]{Proposition}
\newtheorem{cor}[thm]{Corollary}
\theoremstyle{definition}
\newtheorem{dfn}{Definition}[section]
\newtheorem{ex}{Example}[section]
\theoremstyle{remark}
\newtheorem{rem}{Remark}[section]
\DeclareMathOperator{\tr}{tr}
\DeclareMathOperator{\Spec}{Spec}
\DeclareMathOperator*{\regprod}{\mathchoice%
{\ooalign{\hbox{$\displaystyle\prod$}\crcr\hbox{$\displaystyle\coprod$}}}
{\ooalign{\hbox{$\textstyle\prod$}\crcr\hbox{$\textstyle\coprod$}}}
{\ooalign{\hbox{$\scriptstyle\prod$}\crcr\hbox{$\scriptstyle\coprod$}}}
{\ooalign{\hbox{$\scriptscriptstyle\prod$}\crcr\hbox{$\scriptscriptstyle\coprod$}}}}
\newcommand{\setC}[3]{ \mathcal{C}^{(#1)}_{#2 #3} }
\newcommand{\bs}{\mathbf{s}}
\newcommand{\br}{\mathbf{r}}
\newcommand{\bZ}{\mathbf{0}}
\newcommand{\HRabi}[1]{H_{\text{\upshape R}}^{#1}} 
\newcommand{\Z}{\mathbb{Z}} 
\newcommand{\R}{\mathbb{R}} 
\newcommand{\C}{\mathbb{C}} 
\newcommand{\mat}[1]{\begin{bmatrix}#1\end{bmatrix}}
\newcommand{\matrixZZ}{
  \mat{ \phantom{-}1 & -1 \\ -1 & \phantom{-}1 }
}
\newcommand{\matrixZO}{
  \mat{ -1 & -1 \\ \phantom{-}1 & \phantom{-}1 }
}
\newcommand{\matrixOZ}{
  \mat{ -1 & \phantom{-}1 \\ -1 & \phantom{-}1 }
}
\newcommand{\matrixOO}{
  \mat{ \phantom{-}1 & \phantom{-}1 \\ \phantom{-}1 & \phantom{-}1 }
}
\DeclareMathOperator{\sh}{sh}
\DeclareMathOperator{\ch}{ch}
\newcommand{\e}{\epsilon}
\title{The heat kernel of the asymmetric quantum Rabi model}
\author{Cid Reyes-Bustos}
\begin{document}

\maketitle

\begin{abstract}
  In this paper we derive an explicit formula for the heat kernel of the asymmetric quantum Rabi model (AQRM),
  a symmetry breaking generalization of the quantum Rabi model (QRM). The method described here is an extension of the recently
  developed one  for the heat kernel of the QRM based on the Trotter-Kato formula. In particular,
  the method is not based on path integrals or stochastic methods.  In addition to the heat kernel formula, we present 
  applications including the explicit formula for the partition function and the Weyl law for the distribution of the eigenvalues
  obtained from the analytic continuation of the corresponding spectral zeta function.
\end{abstract}


\section{Introduction}
\label{sec:introduction}
  
The quantum Rabi model (QRM) is widely recognized as one of the fundamental models in quantum optics and the study of its properties has been in the spotlight in theoretical and experimental physics for a number of years. One of the main drivers is the possible application of the QRM to the development quantum computing and quantum information sciences. In parallel, there is a growing interest in the mathematical study of the properties of the QRM and its spectrum (see e.g. \cite{BdMZ2019,KRW2017,Sugi2016, KW2019}).

Notably, in \cite{RW2020hk} the explicit (or analytical) formula for the heat kernel of the QRM was obtained based on the Trotter-Kato product formula. The derivation involves the use of several techniques not common to this type of computation, such as the the use of harmonic analysis on the family of finite groups $\{\Z_2^n\}_{n\geq 0}$ or an extensive combinatorial (graph theoretical) discussion. The resulting expression is an power series with coefficients given by certain multiple integrals, which are interpreted as orbits of the action of the infinite symmetric group on the inductive limit of the groups $\Z_2^n, n \geq 0$, or as a as a type of discrete path integral \cite{RW2020z}. We note that conventional approaches using, for instance methods based on Feynman integrals or Feynman-Kac formulas, have not produced fully explicit formulas for the heat kernel of the QRM.

A careful examination of the computation of the heat kernel for the QRM reveals that it may be generalized to systems (physical or mathematical) other than the QRM Hamiltonian. Therefore, the understanding of the scope and limitations of the method is a significant topic of research in both in mathematical and theoretical physics. In this paper, as an starting point for this program, we extend the computation of the heat kernel to one of the simplest, yet significant, generalizations of the QRM, the asymmetric quantum Rabi model (AQRM).

The Hamiltonian of the AQRM is given by
\[ 
  \HRabi{\e} := \omega a^{\dagger}a + \Delta \sigma_z + g (a + a^{\dagger}) \sigma_x + \e \sigma_x,
\]
where, as usual,
\[
  \sigma_x =
  \begin{bmatrix}
    0 & 1  \\ 1 & 0
  \end{bmatrix}, \qquad
  \sigma_z =
  \begin{bmatrix}
    1 & 0 \\ 0 & -1
  \end{bmatrix},
\]
are the Pauli matrices, $a^{\dag}$ and $a$ are the creation and annihilation operators of the quantum harmonic oscillator with frequency $\omega$ (in this paper we set $\omega=1$) , i.e. $[a,a^{\dag}]=1$, $\e \in \R$ and $\Delta,g>0$. The QRM Hamiltonian is recovered by taking $\e  = 0$.

The AQRM was introduced as a ``symmetry breaking'' generalization of the QRM in \cite{B2011PRL}, where the proof of the exact solvability of both models was presented. We remark that it is the  $\Z_2$-symmetry in the QRM Hamiltonian that allows the presence of degeneracies in the spectrum in a natural way. In fact, the presence of the bias parameter breaks the natural symmetry of the QRM and makes the spectrum of the AQRM multiplicity free. The asymmetric model is considered to provide a more realistic description of the circuit QED experiments employing flux qubits in comparison with the QRM \cite{ni2010,YS2018}. 

Recall that heat kernel $K^{(\e)}_{\text{R}}(x,y,t)$ of the AQRM is the integral kernel corresponding to the operator \(e^{-t \HRabi{\e}} \) (one-parameter semigroup), that is, $K^{(\e)}_{\text{R}}(x,y,t)$ satisfies
\[
e^{-t \HRabi{\e}}\phi(x)= \int_{-\infty}^\infty  K^{(\e)}_{\text{R}}(x,y,t) \phi(y)dy
\]
for a compactly supported smooth function $\phi: \R \to \C^2$.  Equivalently, $K^{(\e)}_{\text{R}}(x,y,t)$ is the matrix-valued function satisfying
\[
  \frac{\partial}{\partial t} K^{(\e)}_{\text{R}}(x,y,t)= -\HRabi{\e} K^{(\e)}_{\text{R}}(x,y,t)
\]
for $t>0$ and $\lim_{t \to 0}K_{\text{R}}(x,y,t)=\delta_x(y) \bf{I}_2$ for $x,y \in \R$.

The formula developed in this paper (Theorem \ref{sec:main}) shows that the heat kernel is given as uniformly convergent series on the parameter $\Delta$ similar to the of the QRM, and that the parts corresponding to the bias parameter $\e$ in each coefficient of the series appear independent of the other parameters.

The main contribution of this paper is develop the tools for the generalization of the heat kernel computation for the AQRM. In particular, new difficulties arise since the non-commutative part in the computation is no longer a diagonal matrix. The scalar part of the computation remain largely the same as the case of the QRM, so in order to keep the exposition short and to avoid repeating the computation of the heat kernel, we focus on the new features and we refer the reader to \cite{RW2020hk} for the details.   The new features in the computation actually appear in the computation of the heat kernel for more general models and thus the result of this paper is a significant step towards the full understanding and generalization of the method.

Two immediate applications of the heat kernel formula is the formula for the time propagator, which is obtained by analytic continuation of the heat kernel to the imaginary line, and the explicit formula for the partition function of the AQRM. The propagator formula is expected to provide more precise computations for time evolution in AQRM (see \cite{WKB2012}), but explicit numerical studies are yet to be performed.

In this paper, we also consider further applications of the explicit formulas for the heat kernel and partition function of the AQRM obtained by means of the spectral zeta function.  The spectral zeta function of a physical model allows the study of the spectrum from the viewpoint of number theory with applications both to number theory and physics \cite{EE2012}. The (Hurwitz-type) spectral zeta function \(\zeta^{(\e)}_{\text{R}}(s; \tau)\) is given by the Dirichlet series
\[
  \zeta^{(\e)}_{\text{R}}(s; \tau):= \sum_{j=1}^\infty (\lambda^{(\e)}_j +\tau)^{-s},
\]
for $\Re(\tau)$ large enough. Here, $\lambda^{(\e)}_i$ are the (ordered) eigenvalues in the spectrum of $\HRabi{\e}$. In this paper, the meromorphic continuation is obtained by identifying the spectral zeta function of the AQRM as the Mellin transform of the partition function of the AQRM and then changing the path of integration in an appropriate way.

As an application we obtain the Weyl law for the distribution of the eigenvalues of the AQRM showing that asymptotically the  distribution does not depend on the bias parameter. We also complete the proof that the $G$-function of the AQRM is essentially given by the spectral determinant obtained from the spectral zeta function, first shown in \cite{KRW2017} under the assumption of the analytic continuation of the spectral zeta function of the AQRM by the method of \cite{Sugi2016}, which was not yet proved at the time.

Let us describe the structure of this paper. First, in Section \ref{sec:main} we give the main results of the paper, that is, the explicit formulas for the heat kernel and partition function of the AQRM. In Section \ref{sec:analyt-cont-spectr} we show the analytic continuation of the spectral zeta function of the AQRM and some consequences including the Weyl law.
The rest of the paper is devoted to the proof of the heat kernel formula. In Section \ref{sec:proof} we give a summary of the general method of computation. In Section \ref{sec:Hamilt} we give some general remarks on the asymmetric quantum Rabi model and make the initial computations with the Trotter-Kato product formula to obtain a limit formula that resembles a Riemann sum. Next, in Section \ref{sec:four-analys-limit} using Fourier analysis in the family of finite groups $\Z_2^N$ we transform certain sums in the limit into multiple Riemann integrals
allowing us to obtain a second limit expression that can be evaluated as a Riemann integral. The evaluation of the Riemann sum completes the proof of the explicit formula for the heat kernel and then the partition function is obtained directly as a corollary.

In light of the hidden symmetry of the AQRM, it would be interesting to consider the formulas in this paper when the parameter is half-integer. Indeed, recall that in the case of the AQRM there is no obvious way to define subspaces whose respective spectral graphs intersect creating degeneracies in the spectrum of the AQRM. It was shown in \cite{KRW2017} (based on previous works \cite{LB2015JPA,W2016JPA}) that degeneracies may appear for certain choices of parameters $g$ and $\Delta$ when the parameter $\e$ is half-integer. In \cite{MBB2020}, the symmetry operator for small values of half-integer $\e$ was obtained explicitly along with a general method of computation for the arbitrary half-integer $\e$ case. A more systematic approach to the half-integer case is given in \cite{RBW2021}.

\section{Main results}
\label{sec:main}

In this section we give the main results of this paper, the explicit formulas for the heat kernel and the partition
function for the AQRM. The proof of the formulas is the main contribution of this paper and are given in detail in
Section \ref{sec:proof}.

In this paper, we denote
by
\[
  \lambda^{(\e)}_1 < \lambda^{(\e)}_2 \leq \lambda^{(\e)}_3 \leq \ldots \leq \lambda^{(\e)}_n \leq \ldots (\nearrow \infty)
\]
the eigenvalues of \(\HRabi{\e}\). For \(\lambda = 0 \), we use the notation
\[
  \idotsint\limits_{0\leq \mu_1 \leq \cdots \leq \mu_\lambda \leq 1} f(x) d \bm{\mu_0} = f(x),
\]
for any function \(f\).

\begin{thm} \label{thm:heat}
    The heat kernel $K^{(\e)}_{\text{R}}(x,y,t) $ of the AQRM is given by the uniformly convergent series
  \begin{align*}    
    &K^{(\e)}_{\text{R}}  (x,y,t) =  K_0(x,y,g,t) \Bigg[ \sum_{\lambda=0}^{\infty} (t\Delta)^{\lambda} e^{-2g^2 (\coth(\tfrac{t}2))^{(-1)^\lambda}}
    \\
    &\quad \times \idotsint\limits_{0\leq \mu_1 \leq \cdots \leq \mu_\lambda \leq 1}  \exp\left(4g^2 \frac{\cosh(t(1-\mu_\lambda))}{\sinh(t)}(\frac{1+(-1)^\lambda}{2}) + \xi_{\lambda}(\bm{\mu_{\lambda}},t)\right) \\
         &\qquad  \begin{bmatrix}
            (-1)^{\lambda} \cosh  &  (-1)^{\lambda+1} \sinh  \\
            -\sinh &  \cosh
          \end{bmatrix}
                  \left[ \theta_{\lambda}(x,y,\bm{\mu_{\lambda}},t) + \e \left( \eta_\lambda(\bm{\mu_{\lambda}},t) + t\right) \right] d \bm{\mu_{\lambda}} \Bigg],
  \end{align*}
  with \(\bm{\mu_0} := 0\) and \(\bm{\mu_{\lambda}}= (\mu_1,\mu_2,\cdots,\mu_\lambda)\) and \(d \bm{\mu_{\lambda}} = d \mu_1 d \mu_2 \cdots d \mu_{\lambda} \) for \(\lambda \geq 1\). Here,
  \begin{align*}
    K_0(x,y,g,t)
    & = \frac{e^{g^2t}}{\sqrt{\pi (1-e^{-2t})}} \exp\left( - \frac{1+e^{-2t}}{2(1-e^{-2t})} (x^2 + y^2) +  \frac{2 e^{-t} x y}{1-e^{-2t}} \right)\\
  \end{align*}
  and the functions \(\theta_{\lambda}(x,y, \bm{\mu_{\lambda}},t)\),$\xi_\lambda(\bm{\mu_{\lambda}},t)$ and $\eta_\lambda(\bm{\mu}_{\lambda},t)$ are given by
\begin{align*} 
  \theta_{\lambda}(x,y, \bm{\mu_{\lambda}},t) &:= \frac{2\sqrt{2} g e^{-t}}{1-e^{-2t}}\left( x (e^{t}+e^{- t}) - 2 y \right) \left( \frac{1-(-1)^{\lambda}}{2} \right) - \sqrt{2} g (x-y) \frac{1+e^{-t}}{1-e^{-t}} \\
                           & \quad +   \frac{2\sqrt{2} g e^{-  t}}{1-e^{-2 t}} (-1)^{\lambda} \sum_{\gamma=0}^{\lambda} (-1)^{\gamma} \Big[ x  (e^{t(1 -   \mu_{\gamma}) } + e^{ t( \mu_{\gamma} - 1)})  -  y  (e^{- t \mu_{\gamma} }+ e^{ t \mu_{\gamma}})  \Big] \nonumber \\
  \xi_\lambda(\bm{\mu_{\lambda}},t) &:=  -\frac{2g^2 e^{-t}}{1-e^{-2t}} \left(e^{\frac12t(1-\mu_\lambda)}-e^{\frac12 t(\mu_{\lambda}-1)}\right)^2 (-1)^{\lambda}  \sum_{\gamma=0}^{\lambda} (-1)^{\gamma} (e^{- t \mu_{\gamma} }+ e^{ t \mu_{\gamma}})  \\
                     &\qquad  - \frac{2 g^2 e^{-t} }{1-e^{-2 t}} \sum_{\substack{0\leq\alpha<\beta\leq \lambda-1\\ \beta - \alpha \equiv 1 \pmod{2}  }}  \left( (e^{t(1-\mu_{\beta+1})} +  e^{t(\mu_{\beta+1}-1)} )-(e^{t(1-\mu_{\beta})} +  e^{t(\mu_{\beta}-1)}) \right) \nonumber  \\
                           &\qquad \qquad \qquad \qquad \qquad \qquad \times ( (e^{t  \mu_{\alpha}} + e^{-t \mu_{\alpha}}) - (e^{t \mu_{\alpha+1}} + e^{-t \mu_{\alpha+1}})), \nonumber \\
   \eta_\lambda(\bm{\mu}_{\lambda},t) &:= -2  t (-1)^{\lambda} \sum_{\gamma=1}^{\lambda} (-1)^{\gamma} \mu_{\gamma}. \nonumber
\end{align*}
where we use the convention \( \mu_0 = 0 \) whenever it appears in the formulas above.
\end{thm}

It is worth noting the expression of the heat kernel the system parameters $g,\Delta,\e$ do not appear mixed in the
coefficients. Note also that since the action of $\e$ in the original Hamiltonian is just the displacement $\e \sigma_z$, in the heat kernel
formula the contribution of the integration variables $\mu_i$ with coefficients including $\e$ appear linearly inside the exponential.

Directly from the analytical formula it is possible to verify that the heat kernel is well-behaved with respect to
the spatial variables. The proof of the following proposition may be adapted directly from that of the QRM, given
in \cite{RW2020z}. 

\begin{prop}
  Let
  \[
    K^{(\e)}_{\text{R}}(x,y,t)  =
    \begin{bmatrix}
      k_{1,1}(x,y,t; g, \Delta,\e) & k_{1,2}(x,y,t; g, \Delta,\e)\\
      k_{2,1}(x,y,t; g, \Delta,\e) & k_{2,2}(x,y,t; g, \Delta,\e)
    \end{bmatrix}.
  \]
  Then, for fixed $g,\Delta, t>0$ and $\e \in \R$, there are positive constants $a,b$ such that
  \[
    |k_{i,j}(x,y,t; g, \Delta,\e)| \leq a e^{-b (x^2+y^2)},
  \]
  for $i,j \in \{1,2\}$.
\end{prop}

The proposition shows that $K^{(\e)}_{\text{R}}(x,y,t)$ is a continuous function with respect to the spacial variables $x,y$. Similar results may be obtained with respect to the time variable $t$ and the system parameters.

By considering the analytic continuation of the heat kernel given in Theorem \ref{thm:heat} and the change of variable $t \to i t$, we obtain a formula for the integral kernel of the time-evolution operator (i.e. the time or wave propagator) of the AQRM. We refer the reader to \cite{RW2020z} for the technical details for the case of the QRM which can be easily extrapolated to cover the AQRM (see also Proposition \ref{prop:holomorphy} below).

Next, we consider the explicit formula of the partition function of the AQRM, defined by
\[
  Z_{\text{R}}^{(\e)}(\beta):= \sum_{n=1}^{\infty} e^{-\beta \lambda^{(\e)}_n} = \text{Tr}[e^{-t \HRabi{\e}}].
\]
The explicit formula for the partition function then is obtained directly from that of the heat kernel in an elementary way. The proof of the explicit formula is given in Section \ref{sec:four-analys-limit}.

\begin{cor} \label{cor:partition}
  The partition function \( Z^{(\e)}_{\text{R}}(\beta)\) of the AQRM is given by
  \begin{align*}
    Z^{(\e)}_{\text{R}}(\beta) = &\frac{2 e^{g^2\beta}}{1-e^{- \beta}} \Bigg[ \ch(\e \beta) \\
     & \quad+ \sum_{\lambda=1}^{\infty} (\beta \Delta)^{2\lambda}\idotsint\limits_{0\leq \mu_1 \leq \cdots \leq \mu_{2 \lambda} \leq 1} \Theta_{2 \lambda}(g,\beta,\bm{\mu_{2 \lambda}}) \ch\left[\e \beta \left( 1-2 \sum_{\gamma=1}^{2\lambda} (-1)^{\gamma} \mu_{\gamma}\right) \right] \bm{\mu_{2 \lambda}}\Bigg],
  \end{align*}
  where
  \[
    \Theta_{2 \lambda}(g,\beta,\bm{\mu_{2 \lambda}}) =  \exp\left(-2g^2 \coth(\tfrac{\beta}2)+ 4g^2\frac{\ch(\beta(1-\mu_{2\lambda}))}{\sh(\beta)} +  \xi_{2 \lambda}(\bm{\mu_{2\lambda}},\beta) +\psi_{2\lambda}^{-}(\bm{\mu_{2 \lambda}},\beta)\right)
  \]
  with
\begin{align*} 
  \psi_\lambda^{-}(\bm{\mu_{\lambda}},t) &:=  \frac{4 g^2 }{\sh(t)}\left[ \sum_{\gamma=0}^{\lambda} (-1)^{\gamma} \sh\left(t\left(\tfrac12 - \mu_{\gamma}\right)\right)\right]^2\\
\end{align*}
for \(\lambda \geq 1\) and \(\bm{\mu_{\lambda}} = (\mu_1,\mu_2,\cdots,\mu_\lambda) \) and where \( \mu_0 = 0 \).
\end{cor}

In particular, from the explicit formula it is immediate to verify that
\[
  Z^{(-\e)}_{\text{R}}(\beta) = Z^{(\e)}_{\text{R}}(\beta).
\]
This property reflects the well-known fact that the spectrum of $\HRabi{\e}$ is equal to the spectrum of
$\HRabi{-\e}$ (see e.g. Proposition 5.2 of \cite{KRW2017}).

In the next section we give applications of the explicit formulas of the partition function by means of the
spectral zeta function associated to the AQRM.

\begin{rem}
  When $g =0$, we verify that the spectrum of the AQRM is given by $n \pm \mu$, where $\mu = \sqrt{\Delta^2+\e^2}$. Also, in the proof of Theorem \ref{thm:heat} the parameter $\mu$ appears in a distinguished position, so it may be argued that a formulas for the heat kernel and partition functions given by power series in the parameter $\mu$ (in place of $\Delta$) would be more natural. This requires a nontrivial modification to the method and proof and is therefore not considered in this paper.
\end{rem}

\section{Applications to the spectrum of the AQRM via spectral zeta functions}
\label{sec:analyt-cont-spectr}

In this section, we prove the analytic continuation for the spectral zeta function of the AQRM to obtain applications to the study
of the spectrum of the AQRM. We note that the additional terms appearing in the partition function formula do not change the convergence
properties in a significant way. For a more detailed exposition, we refer the reader to \cite{RW2020z} (Appendix A).

The (Hurwitz-type) spectral zeta function \(\zeta^{(\e)}_{\text{R}}(s; \tau)\) is defined by the Dirichlet series
\[
  \zeta^{(\e)}_{\text{R}}(s; \tau):= \sum_{j=1}^\infty (\lambda^{(\e)}_j +\tau)^{-s}.
\]
We verify (see e.g. \cite{Sugi2016} for the QRM case) that the spectral zeta function $\zeta^{(\e)}_{\text{R}}(s; \tau)$
is absolutely convergent for \(\Re(s)>1 \) for \( \tau \in \C - \Spec({\HRabi{\e}})\) with $\Re(\tau) > \Delta + g^2 + |\e|$.

In the region of absolute convergence, we have a Mellin transform representation of the spectral zeta function
given by
 \begin{align*} 
   \zeta_{\text{R}}^{(\e)}(s;\tau) & = \frac1{\Gamma(s)}\int_0^\infty t^{s-1}Z^{(\e)}_{\text{R}}(t)e^{-t\tau}dt.
\end{align*}

In order to obtain useful applications of the spectral zeta function, we need to extend its domain of definition to
the largest possible. It is the relation between the partition function and the spectral zeta function that allows us to give
a proof for the analytic continuation.

Let us define the function \( \Omega^{(\e)}(t)\) by
\begin{align*}
  \Omega^{(\e)}(t)& := (1-e^{-t}) Z_{\text{R}}^{(\e)}(t) 
\end{align*}

The main result of this section is the path integral expression for the spectral zeta function that gives the analytic continuation
to the complex plane with a simple pole at $s=1$. 

\begin{thm} \label{IntRep_SZF}
  For $\tau > \Delta + g^2 + |\e|$, we have
  \begin{align}\label{ContourSZF}
    \zeta_{\text{R}}^{(\e)}(s;\tau)= -\frac{\Gamma(1-s)}{2\pi i}\int_\infty^{(0+)} \frac{(-w)^{s-1} \Omega^{(\e)}(w)e^{-\tau w}}{1-e^{-w}}dw.
  \end{align}
  Here the contour integral is given by the path which starts at $\infty$ on the real axis, encircles the origin (with a radius smaller than $2\pi$) in the positive direction and returns to the starting point and it is assumed $|\arg(-w)|\leq \pi$. This gives the meromorphic continuation of $\zeta_{\text{R}}(s;\tau)$  to the whole plane where the only singularity is a simple pole with residue $2$ at $s=1$. \qed
\end{thm}

The theorem follows from the following estimate for $\Omega^{(\e)}(t)$, which we list here for completeness. The additional terms depending on $\e$ are easily bounded and the theorem follows as in the case of QRM (see \cite{RW2020z}).

\begin{prop} \label{prop:holomorphy}
  The series defining the function $\Omega^{(\e)}(t)$ is uniformly convergent in compacts in the
  complex domain $\mathcal{D}$ consisting a union of a half plane $\Re t>0$ and a disc centered at origin with
  radius $r < \pi $. In particular, $\Omega^{(\e)}(t)$ is a holomorphic function in the region \(\mathcal{D} \).
\end{prop}

A consequence of the meromorphic continuation of $\zeta_{\text{R}}^{(\e)}(s;\tau)$ is the Weyl law for the distribution of the
eigenvalues of the parity Hamiltonians \(\HRabi{\e}\) obtained by the use of Tauberian theorems in the usual
way (cf. \cite{IW2005a,IW2005b,Sugi2016}).

The distribution of the eigenvalues is described by the spectral counting function, given by
\begin{align*}
  N_{\text{R}}^{(\e)}(T) &= \# \{\lambda \in \Spec(\HRabi{\e}) \, | \, \lambda \le T \}, \\
\end{align*}
for \(T > 0 \).
The value of the residue of the pole of the spectral zeta function given in Theorem \ref{IntRep_SZF} determines the
asymptotic of the spectral counting function.

\begin{cor}
  We have
  \[
    N_{\text{R}}^{(\e)}(T)\sim 2T,
  \]
  as \(T \to \infty \). \qed
\end{cor}

In particular, the distribution of eigenvalues does not depend on the symmetry breaking parameter $\e$. Note that this result provides evidence for the conjecture of Braak for the distribution of energy levels of the QRM and the corresponding extension for the AQRM (see \cite{B2011PRL}).

Another application of the meromorphic continuation is the result given in \cite{KRW2017} on the spectral determinant of the AQRM. The analytic continuation at the point $s=0$ allows us to consider the notion of zeta-regularized product and spectral determinant, a generalization of the characteristic polynomial for operators.

The spectral determinant of AQRM is defined as
\begin{equation}\label{eq:ZRProduct}
  \det (\tau-\HRabi{\e}):= \regprod_{i=0}^\infty (\tau-\lambda_i):= \exp\big(-\frac{d}{ds}\zeta_{\HRabi{(\e)}}(s,\tau)\big|_{s=0}\big).
\end{equation}

The spectral determinant is an entire function that vanishes exactly at $\tau = \lambda \in \Spec(\HRabi{\e})$.
In \cite{KRW2017} it was shown that spectral determinant obtained from the analytic continuation of the spectral zeta
function is essentially equivalent (up to a non-vanishing factor) to the $G$-function obtained in the studies of
exact-solvability of the AQRM \cite{LB2015JPA}. However, the result was conditional on the analytic continuation of the
spectral zeta function of the AQRM to $s=0$, which was not proved at the time.

\begin{cor}[\cite{KRW2017}]
  There exists an entire non-vanishing function $c_{\e}(\tau ; g,\Delta)$ such that
  \begin{equation}
    \det (\tau-g^2-\HRabi{\e}) =  c_{\e}(\tau;g,\Delta) \mathcal{G}_{\e}(\tau;g,\Delta).
  \end{equation}
\end{cor}

Here, $\mathcal{G}_{\e}(\tau;g,\Delta)$ is the generalized $G$-function of the AQRM defined as
\begin{equation}\label{eq:neqG}
  \mathcal{G}_\e(x;g,\Delta) := G_\e(x;g,\Delta)\Gamma(\e-x)^{-1}\Gamma(-\e-x)^{-1},
\end{equation}
where $G_\e(x;g,\Delta)$ is the usual $G$-function of the AQRM. Different from the usual $G$-function $\mathcal{G}_\e(x;g,\Delta)$ is
an entire function which zeros correspond to the eigenvalues of AQRM, not only the regular spectrum. We note that
the function $\mathcal{G}_\e(x;g,\Delta)$ was originally considered for numerical computations in \cite{LB2015JPA} in a
truncated form.

A deeper study of the relation between the exact solvability of interaction models and the analytic continuation
of the corresponding spectral zeta functions is outside of the scope of the present paper.

\section{Proof of the heat kernel formula}
\label{sec:proof}

Before starting the proof of the main result, we briefly describe the method of the computation of the heat kernel,
dividing the process into a number of steps.

In the first step, the Hamiltonian $\HRabi{\e}$ is written as
\[
  \HRabi{\e} = H_1 + H_2,
\]
in such a way that each $H_i$ for $i=1,2$ satisfies the hypothesis of the Trotter-Kato formula and such that the
heat kernel can be explicitly computed. In the case of the AQRM, by means of a Bogoliubov transformation the choice
of operators is natural. In particular, $H_1$ is a type of non-commutative quantum harmonic oscillator.

The second step is to find an expression for the integral kernel of the operator
\[
  \left(e^{-t H_1} e^{-t H_2} \right)^N
\]
for arbitrary $N \in \Z_{\geq 1}$. To do this we have to consider both the scalar and matrix-valued part of the heat kernel by separate. In the case of the AQRM, the scalar value part is identical to that of the QRM (and is evaluated by Gaussian integration) but the matrix-value part requires a different method for the computations. Once the $N$-th power kernel is computed in an elementary way, the sum defining the heat kernel is rearranged by using the commutation structure of the matrices involved in the Hamiltonian, in a way that resembles a Riemann-sum (see \eqref{eq:limitfirst}). However, it cannot be evaluated directly as a Riemann integral due to oscillation of the signs in the summands.

To overcome this problem, we use the Fourier transform in finite groups $\Z_2^k$, for $k\geq 1$ (see e.g. \cite{Cecc2008}). In particular, using Parseval formula it is possible to control the oscillation of the sign in the dual stage with respect to certain $\Z_2$-vector invariant. This step may be interpreted as a transformation into a type of radial functions (see the comments following Def. 3.6 in \cite{RW2020hk}). For the AQRM, the Fourier analysis is considerably more complicated than the case of the QRM since intermediate expressions cannot be computed directly.

Once the oscillation of the signs is controlled, it remains to transform certain expression (given in terms of groups  $\Z_2^k$) into iterated integrals using a standard Riemann-Stieltjes method and transform the limit expression into a Riemann-type sum (see \eqref{eq:secondlimit}) that can be evaluated in a straightforward way completing the computation of the heat kernel.

\subsection{Hamiltonian factorization and initial considerations} 
\label{sec:Hamilt}

The main tool behind the computation of the heat kernel of the AQRM is the Trotter-Kato formula
(see e.g \cite{Calin2011,Kato1978}, also known as Trotter formula). First, we write the Hamiltonian
$\HRabi{\e}$ as the sum of two simpler Hamiltonians whose heat kernels can be easily computed. It is then
immediate that
\[
  \HRabi{\e} = b^{\dagger} b - g^2 + \Delta \sigma_z + \e \sigma_x,
\]
with \( b = b(g) := a +  g \sigma_x \). We note that the operator $b^{\dag}b$ may be regarded as a non-commutative (or displaced) version of the quantum harmonic oscillator since the operators $b,b^\dag$  satisfy $[b,b^{\dag}]=\bm{I}_2$.
Thus, the spectrum of $b^{\dagger} b - g^2$ is given by
\[
  \Spec\left( b^{\dagger} b - g^2 \right) = \left\{ n - g^2 \, \big| \, n \in \Z_{\geq 0} \right\},
\]
where each eigenvalue has multiplicity $2$.

The operators \(b^{\dagger}b - g^2\) and \(\Delta \sigma_z + \e \sigma_x \) satisfy the conditions of the Trotter-Kato product formula and
we have
\[
  e^{- t \HRabi{\e}} = e^{- t (b^{\dagger}b -g^2 + \Delta \sigma_z + \e \sigma_x)} = \lim_{N\to \infty} (e^{-t (b^{\dagger}b -g^2)/N} e^{-t(\Delta \sigma_z + \e \sigma_x )/N})^N,
\]
in the strong operator topology.

The next step is to compute the integral kernel of the operator $e^{-t (b^{\dagger}b -g^2)} e^{-t(\Delta \sigma_z + \e \sigma_x )}$. This is done
in the standard way by using the Schwartz kernel and the Mehler's formula for the quantum Harmonic oscillator (see e.g.  \cite{Calin2011}).

\begin{prop} \label{prop:local_kernel}
The integral kernel \(D(x,y,t)\) for \(e^{-t (b^{\dagger}b -g^2)} e^{-t(\Delta \sigma_z + \e \sigma_x) }  \) is given by
\begin{align*}
 D(x,y,t) 
 =  \frac{u^{g^2}}{\sqrt{\pi (1-u^2)}} & \exp\left( -\frac{1-u}{1+u} \frac{((x+y)^2 + 8 g^2)}{4} -
      \frac{1+u}{1-u}\frac{(x-y)^2}{4} \right)  \\
      & \times \exp\left(- \frac{1-u}{1+u} \sqrt{2} g (x+y) \sigma_x \right) u^{\Delta \sigma_z + \e \sigma_x},
\end{align*}
with \(u=e^{-t}\) \qed.
\end{prop}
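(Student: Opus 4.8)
The plan is to compute the integral kernel as a product of two kernels: the kernel $K_1(x,y,t)$ of $e^{-t(b^\dagger b - g^2)}$ and the kernel $K_2(x,y,t)$ of $e^{-t M(\epsilon,\Delta)}$ where $M(\epsilon,\Delta) = \Delta\sigma_z + \epsilon\sigma_x$, and then convolve: $K(x,y,t) = \int_{-\infty}^\infty K_1(x,z,t) K_2(z,y,t)\,dz$. Since $M(\epsilon,\Delta)$ is a constant (in $x$) self-adjoint $2\times 2$ matrix, its heat semigroup is simply multiplication by the matrix exponential $e^{-tM(\epsilon,\Delta)}$, so $K_2(x,y,t) = \delta_x(y)\, e^{-t M(\epsilon,\Delta)}$ and the convolution collapses to $K(x,y,t) = K_1(x,y,t)\, e^{-t M(\epsilon,\Delta)}$. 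It then remains to identify $e^{-tM(\epsilon,\Delta)}$ with $u^{\Delta\sigma_z + \epsilon\sigma_x}$ (with $u = e^{-t}$), which is just the definition of the matrix power $u^{A} := e^{(\log u) A} = e^{-tA}$; no diagonalization is needed here, though one could note $M(\epsilon,\Delta)^2 = (\Delta^2+\epsilon^2)\bm{I}_2$ if an explicit entrywise form were wanted.

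The real work is computing $K_1(x,y,t)$, the kernel of $e^{-t(b^\dagger b - g^2)}$ with $b = a + g\sigma_x$. First I would diagonalize $\sigma_x$ by the fixed unitary $F = \tfrac1{\sqrt2}\begin{bmatrix}1 & 1\\ 1 & -1\end{bmatrix}$, so that $F\sigma_x F^{-1} = \sigma_z = \Diag(1,-1)$. In the rotated frame $b$ becomes $\Diag(a+g,\ a-g)$, i.e. the problem decouples into two scalar displaced harmonic oscillators $a \pm g$. For each sign $\pm$, the operator $(a\pm g)^\dagger(a\pm g)$ is a shifted number operator: writing $a = \tfrac1{\sqrt2}(x + \partial_x)$, $a^\dagger = \tfrac1{\sqrt2}(x - \partial_x)$, we have $(a\pm g)^\dagger(a\pm g) = \tfrac12(-\partial_x^2 + (x \pm \sqrt2 g)^2 - 1)$, which is a harmonic oscillator Hamiltonian centered at $\mp\sqrt2 g$. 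Its heat kernel is obtained from the classical Mehler formula for $e^{-t \cdot \frac12(-\partial_x^2 + x^2 - 1)}$ by translating $x \mapsto x \pm \sqrt2 g$, $y \mapsto y \pm \sqrt2 g$, giving a scalar Gaussian kernel $\mathcal M_\pm(x,y,t)$; then multiply by the scalar $e^{tg^2} = u^{-g^2}$ coming from the $-g^2$ term — wait, more carefully, one includes $u^{g^2}$ from $e^{+tg^2}$, tracking signs against the stated $u^{g^2}$ prefactor.

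Assembling, $K_1(x,y,t) = F^{-1}\,\Diag(\mathcal M_+(x,y,t)u^{g^2},\ \mathcal M_-(x,y,t)u^{g^2})\,F$. The main obstacle — really the only nontrivial computational point — is to reorganize this diagonal matrix of two shifted Mehler kernels into the stated closed form: the common factor $\tfrac{u^{g^2}}{\sqrt{\pi(1-u^2)}}\exp\!\big(-\tfrac{1-u}{1+u}\tfrac{(x+y)^2+8g^2}{4} - \tfrac{1+u}{1-u}\tfrac{(x-y)^2}{4}\big)$ must emerge from the part of the Gaussian exponents common to both signs (after expanding $(x\pm\sqrt2 g)^2$, etc., and using the Mehler coefficients $\coth$ and $\tanh$ of $t/2$, which rewrite as $\tfrac{1+u}{1-u}$ and $\tfrac{1-u}{1+u}$), while the sign-dependent cross terms $-\tfrac{1-u}{1+u}\sqrt2\,g(x+y)$ appear with a $\pm$ and therefore reassemble, under conjugation by $F$, into the matrix factor $\exp\!\big(-\tfrac{1-u}{1+u}\sqrt2\,g(x+y)\sigma_x\big)$ (using $F^{-1}e^{c\sigma_z}F = e^{c\sigma_x}$). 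Care is needed with the normalization constant of Mehler's formula and with collecting the $g^2$-dependent constants so that exactly $8g^2/4 = 2g^2$ lands inside the first exponential and the prefactor is $u^{g^2}$ rather than $u^{2g^2}$ or similar; once the bookkeeping is done the identity follows. I would close by remarking that $K_2$ contributing merely the right matrix factor $u^{\Delta\sigma_z + \epsilon\sigma_x}$ completes the proof, since everything else commutes through as scalars.
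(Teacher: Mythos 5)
Your proposal is correct and is essentially the paper's own argument: the paper gives no details beyond invoking the Schwartz kernel and Mehler's formula, and your route --- peel off the constant matrix factor $e^{-tM(\e,\Delta)}=u^{\Delta\sigma_z+\e\sigma_x}$, conjugate by the Hadamard matrix to decouple $b^{\dagger}b$ into the two displaced oscillators $(a\pm g)^{\dagger}(a\pm g)$, apply the translated Mehler kernel, and reassemble the $\pm$-dependent cross term into $\exp\bigl(-\tfrac{1-u}{1+u}\sqrt{2}g(x+y)\sigma_x\bigr)$ --- is exactly how that is done. The one place you waver, the prefactor, you should settle rather than hedge: since $u=e^{-t}$, the $-g^2$ shift contributes $e^{tg^2}=u^{-g^2}$, not $u^{g^2}$, so your computation is right and the exponent in the stated proposition is a sign typo (this is confirmed by the paper's own $\widetilde{K}_0(x,y,g,t)=K_0(x,y,g,e^{-t})$, which carries $e^{g^2t}$ in the numerator); your parenthetical ``one includes $u^{g^2}$ from $e^{+tg^2}$'' is the only incorrect sentence and should be deleted.
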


Next, we need  to compute the integral kernel \(D_N(x,y,t) \) of the operator
\[
  \left(e^{-t (b^{\dagger}b -g^2)} e^{-t(M(\e,\Delta))} \right)^N
\]
defined by the integral
\begin{equation}
  \label{eq:multiIntegral}
 D_N(x,y,t) =  \int_{-\infty}^{\infty} \cdots \int_{-\infty}^{\infty} D(x,v_1,t) D(v_1,v_2,t) \cdots D(v_{N-1},y,t) d v_{N-1} d v_{N-2} \cdots d v_1.  
\end{equation} 

By using the elementary identity
\begin{align*}
  \cosh\left(\alpha\right) \mathbf{I} - \sinh\left(\alpha\right)\mathbf{J} = \frac12\left( \mathbf{I} + \mathbf{J}\right) e^{-\alpha}
  + \frac12\left( \mathbf{I} - \mathbf{J}\right) e^{\alpha},
\end{align*}
valid for arbitrary $\alpha \in \C$, to expand the matrix terms in the expression of $D(x,y,t)$ we rewrite
\(D_N(x,y,t) \) as a sum
\begin{align*} 
  D_N(x,y,t) = \sum_{\bm{s} \in \Z_2^{N}} G^{(\e)}_N(u,\Delta,\bs) I_N(x,y,u, \, \bs),
\end{align*}
for a scalar function $ I_N(x,y,u, \, \bs)$ and a $2\times 2$ matrix-valued function $G^{(\e)}_N(u,\Delta,\bs)$ for which we give
explicit expressions below. The group $\Z_2$ appearing as a set in the sum above takes an important role in the computation of the heat kernel.

Since the function $ I_N(x,y,u, \, \bs)$ does not depend on the bias parameter $\e$, it has the same expression as in the symmetric case. In particular, we see that all the integrals in \eqref{eq:multiIntegral} are contained in $ I_N(x,y,u, \, \bs)$ and these can be evaluated by multivariate Gaussian integration.

Let us recall the notation used in the expression of \(I_N(x,y,u, \,\bs) \). For \(\bs \in \Z_2^{N}\) and \(i,j \in \{1,2,\cdots,N\}\), define
\begin{gather*} 
  \eta_i(\bs) := (-1)^{\bs(i)}+(-1)^{\bs(i+1)},\\
  \Lambda^{(j)}(u) := u^{j-1} \left( 1 - u^{2 (N-j) +1} \right), \qquad   \Omega^{(i,j)}(u) =  u^{j-i} \left(1-u^{2 i} \right) \left( 1 -u^{2 (N - j)} \right). \nonumber
\end{gather*}

\begin{thm}[Theorem 2.5 of \cite{RW2020hk}] \label{thm:IN}
  For \(N \in \Z_{\geq 1} \), we have
  \begin{align*} \label{eq:IN}
  &I_N(x,y,u, \,\bs) =  K_0(x,y, g, u^N) \exp \left( \sqrt{2} g \frac{(1-u)}{(1-u^{2 N})}  \sum_{j=1}^{N} (-1)^{\bs(j)}\left(  x \Lambda^{(j)}(u) + y \Lambda^{(N-j+1)}(u)  \right) \right)   \nonumber   \\
          &\times  \exp \left(    \frac{  g ^2 (1-u)^2}{2(1+u)^2 (1-u^{2 N})} \bigg(  \sum_{i=1}^{N-1} \eta_i(\bs)^2 \Omega^{(i,i)}(u)  + 2 \sum_{i<j} \eta_i(\bs) \eta_j(\bs) \Omega^{(i,j)}(u) \bigg) -   \frac{2 N g^2 (1-u)}{1+u} \right). \qed
\end{align*}
\end{thm}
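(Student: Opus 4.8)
The plan is to recognize $I_N(x,y,u,\bs)$ as an explicit $(N-1)$-fold Gaussian integral and to evaluate it in closed form. Unwinding the definition, after the $\cosh$--$\sinh$ decomposition each factor $K(v_{k-1},v_k,t)$ contributes to $I_N$ the scalar Mehler-type kernel of the shifted harmonic oscillator in which the coupling $g(a+a^\dagger)\sigma_x$ has been replaced, on the $k$-th time slice, by the scalar source $\sqrt{2}\,g\,(-1)^{\bs(k)}$. Thus, writing $v_0:=x$ and $v_N:=y$,
\[
  I_N(x,y,u,\bs)=\int_{\R^{N-1}}\prod_{k=1}^{N}\widetilde K_{\bs(k)}(v_{k-1},v_k,u)\,dv_1\cdots dv_{N-1},
\]
where $\widetilde K_s(v,w,u)$ is Gaussian in $(v,w)$ with a linear term governed by $s$. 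The exponent of the integrand is then a quadratic polynomial in $\bv=(v_1,\dots,v_{N-1})$ whose quadratic part is \emph{tridiagonal} (only consecutive $v$'s are coupled), with linear part carrying $x$, $y$ and the signs $(-1)^{\bs(j)}$, and a $\bv$-independent remainder.

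Next I would carry out the Gaussian integration. If the exponent is $-\tfrac12\bv^\top A\bv+\bv^\top\beta+c$, then $I_N=(2\pi)^{(N-1)/2}(\det A)^{-1/2}\exp\bigl(\tfrac12\beta^\top A^{-1}\beta+c\bigr)$, and the task splits into three pieces. First, $A$ is the $(N-1)\times(N-1)$ tridiagonal matrix with constant diagonal $\tfrac{2(1+u^2)}{1-u^2}$ and constant off-diagonal $-\tfrac{2u}{1-u^2}$, obtained by collecting the $\tfrac{1+u}{1-u}(v_{k-1}-v_k)^2$ and $\tfrac{1-u}{1+u}(v_{k-1}+v_k)^2$ contributions of adjacent factors; its determinant, evaluated by the standard continuant (Chebyshev-type) recursion, is proportional to $\tfrac{1-u^{2N}}{(1-u^2)^{N}}$, which, combined with the $N$-fold product of the Mehler prefactors $\tfrac{u^{g^2}}{\sqrt{\pi(1-u^2)}}$ and the $-Ng^2$-type constants inside $c$, collapses to exactly $K_0(x,y,g,u^N)$. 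Second, $A^{-1}$, being the inverse of a constant tridiagonal matrix, is the discrete Green's function whose entry $(A^{-1})_{ij}$ is proportional to $\dfrac{u^{|i-j|}(1-u^{2\min(i,j)})(1-u^{2(N-\max(i,j))})}{1-u^{2N}}$; paired against the part of $\beta$ linear in the signs it produces exactly the double sum $\sum_{i<j}\eta_i(\bs)\eta_j(\bs)\Omega^{(i,j)}(u)$ and the diagonal $\sum_i\eta_i(\bs)^2\Omega^{(i,i)}(u)$, once one observes that the coefficient of $v_k$ in $\beta$ is, up to constants, $\eta_k(\bs)=(-1)^{\bs(k)}+(-1)^{\bs(k+1)}$. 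Third, the boundary entries of $\beta$ carrying $x$ and $y$, threaded through $A^{-1}$, give the linear term $\sqrt{2}\,g\,\tfrac{1-u}{1-u^{2N}}\sum_j(-1)^{\bs(j)}\bigl(x\Lambda^{(j)}(u)+y\Lambda^{(N-j+1)}(u)\bigr)$, the symmetry $j\mapsto N-j+1$ reflecting the invariance of $A$ under reversal of the index set.

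An equivalent and arguably cleaner way to organize the bookkeeping is induction on $N$: since the one-step kernels compose as integral operators, $I_N(x,y,u,\bs)=\int_{\R} I_{N-1}(x,v,u,\bs')\,\widetilde K_{\bs(N)}(v,y,u)\,dv$, where $\bs'$ is the restriction of $\bs$ to its first $N-1$ entries; completing the square in the single variable $v$ reduces the claim to a family of identities relating $\Lambda^{(j)},\Omega^{(i,j)},\eta_i$ for consecutive values of $N$, which are verified by elementary geometric-series and telescoping manipulations in $u$. The base case $N=1$ is Mehler's formula with a linear source, i.e. the scalar part of the Proposition above.

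The main obstacle is the exact evaluation of the tridiagonal determinant and, above all, of the tridiagonal inverse, together with the matching of their closed forms --- including every $(1-u^2)$, $(1+u)$ and $u^{g^2}$ factor --- to the stated $K_0$, $\Lambda^{(j)}$ and $\Omega^{(i,j)}$. The rest is quadratic-form algebra, but because the statement packs three distinct contributions (normalization, $(x,y)$-linear, source--source) into one line, the delicate point is keeping the constants mutually consistent, which the recursion/induction is precisely designed to enforce. As the result is quoted verbatim from \cite{RW2020hk}, one ultimately invokes it directly, but the computation sketched above is the route by which it is obtained.
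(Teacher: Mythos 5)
Your proposal is correct and follows essentially the same route as the source: the paper itself only quotes Theorem 2.5 of \cite{RW2020hk} and remarks that the proof ``essentially reduces to the evaluation of a multivariate Gaussian integral,'' which is precisely the tridiagonal-Gaussian computation you carry out (and your identification of the diagonal $\tfrac{2(1+u^2)}{1-u^2}$, off-diagonal $-\tfrac{2u}{1-u^2}$, the continuant determinant $\propto\tfrac{1-u^{2N}}{(1-u^2)^N}$, and the Green's-function form of $A^{-1}$ matching $\Omega^{(i,j)}$ all check out). No substantive difference from the intended argument.
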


To simplify later computations, we set
\[
  \bar{I}_N(x,y,u, \,\bs) := I_N(x,y,u, \,\bs)/ K_0(x,y, g, u^N).
\]

In general, throughout the computation of the heat kernel, we see that the computations related to the scalar parts of the heat kernel
do not largely differ to the symmetric case.

\subsubsection{Non-commutative part}

In contrast with the scalar part, since the matrix $\Delta \sigma_z + \e \sigma_x$ is not diagonal, the analysis of the non-commutative part is more involved than in the QRM case. In this section, we first obtain an explicit expression of $G^{(\e)}_N(u,\bs)$ and give the expression of the heat kernel as the limit of a Riemann-type sum.

Let 
\[
   \bm{I} :=
   \begin{bmatrix}
     1 & 0 \\
     0 & 1
   \end{bmatrix}, \qquad
   \bm{J} :=
   \begin{bmatrix}
     0 & 1 \\
     1 & 0
   \end{bmatrix},
 \]
then the function $G^{(\e)}_N(u,\bs)$ is given by 
\[
  G^{(\e)}_N(u,\bs) :=  \frac{1}{2^N} \overrightarrow{\prod}_{i=1}^{N}[\bm{I}+(-1)^{1-\bs(j)}\bm{J}] u^{\Delta \sigma_z + \e \sigma_x}.
\]

To simplify the computations, it is convenient to consider the diagonalization of $\Delta \sigma_z + \e \sigma_x$. Set
\[
  M = M(\e,\Delta) := \Delta \sigma_z + \e \sigma_x,
\]
then we verify that
\begin{equation}
  \label{eq:eigen}
  C M C^{-1} =   \begin{bmatrix}
    -\mu & 0 \\ 0 & \mu
 \end{bmatrix},
\end{equation}
with \(\mu = \sqrt{\e^2 + \Delta^2} \) and
\[
  C = C(\e,\Delta) :=
  \begin{bmatrix}
    \Delta-\mu & \e \\ \Delta+\mu & \e.
  \end{bmatrix}.
\]
The parameter $\mu$ appears in several places of the computation, in a similar way to the parameter $\Delta$ in the
corresponding computation of the QRM heat kernel.

\begin{rem}
  It is important to note that for the case $\e=0$, the expression above may not be valid. The problem is that
  in the entries of the matrix $C^{-1}$, the parameter $\e$ appears in the denominator. However, this is not a problem
  since in this case the matrix $\Delta \sigma_z$ is already a diagonal matrix (so in this case we take $C=\mathbf{I}$).
  Moreover, the formulas in this section agree with the case $\e =0$ with a limit interpretation.
\end{rem}

Next, we define two functions, one for the scalar and one for the  matrix part and  of $G^{(\e)}_N(u,\bs)$.
For $v,w \in \{0,1\}$, the function  $h_{v,w}(\tau)$ is given by
\[
  h_{v,w}(\tau) :=  \left(\frac{1+(-1)^{v+w}}2 \right) (1+\tau) +  \left(\frac{(-1)^v+(-1)^{w}}2 \right) \frac{\e}{\mu} (1-\tau) + \left(\frac{1-(-1)^{v+w}}{2} \right) \frac{\Delta}{\mu} (1-\tau).
\]

\begin{ex} \label{ex:funH}
  The function $h_{v,w}(\tau)$ has a simple form for the specific values of $v,w \in \{0,1\}$. Namely,
  \begin{align*}
    h_{0,0}(\tau) &= \frac1{\mu}\left( \mu (1+\tau) + \e (1-\tau) \right), \qquad h_{1,1}(\tau) = \frac1{\mu}\left( \mu (1+\tau) - \e (1-\tau) \right)\\
    h_{0,1}(\tau) &= h_{1,0}(\tau) = \frac{\Delta}{\mu} (1-\tau),
  \end{align*}
  we note that the function \(h_{v,w}(\tau)\) is invariant under permutation of the variables \(v\) and \(w\).
\end{ex}

On the other hand, for \(\bs \in \Z_2^{k}\), we define the matrix-valued function
\[
  \mathbf{M}_k(\bs)  := \mathbf{M}_{i j},
\]
where \(\bs(1) = i \) and \(\bs(k) = j \) for \(i,j = 0,1 \), and
\begin{equation*}
  \label{eq:fourmatrices}
  \mathbf{M}_{0 0} := \matrixZZ, \mathbf{M}_{0 1} := \matrixZO, \mathbf{M}_{1 0} := \matrixOZ, \mathbf{M}_{1 1} := \matrixOO.
\end{equation*}

Next, we obtain the explicit form of  $G^{(\e)}_k(u,\bs)$.

\begin{prop}
  For $\bs \in \Z_2^{k}$, we have
  \begin{equation}
    \label{eq:Gcomp}
        G^{(\e)}_k(u,\bs) =  \frac{ \prod^{k-1}_{i=1}h_{s(i),s(i+1)}(u^{2\mu})}{u^{(k-1)\mu} 2^k} \mathbf{M}_k(\bs) u^{\Delta \sigma_z + \e \sigma_x},
  \end{equation}
\end{prop}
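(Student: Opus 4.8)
The plan is to realise each bracket as a rank-one matrix and then telescope the ordered product (which, recall, carries a factor $u^{M}$, $M:=\Delta\sigma_z+\e\sigma_x$, after each bracket). Let $\bm{p}_1:=(1,1)^{\mathsf{T}}$ and $\bm{p}_0:=(-1,1)^{\mathsf{T}}$ be the unnormalised eigenvectors of $J$ for the eigenvalues $+1$ and $-1$. A one-line check gives $I+(-1)^{1-v}J=\bm{p}_v\bm{p}_v^{\mathsf{T}}$ for $v\in\{0,1\}$ and, comparing entrywise with the four matrices $\mathbf{M}_{00},\mathbf{M}_{01},\mathbf{M}_{10},\mathbf{M}_{11}$, that $\mathbf{M}_{ij}=\bm{p}_i\bm{p}_j^{\mathsf{T}}$; in particular $\mathbf{M}_k(\bs)=\bm{p}_{\bs(1)}\bm{p}_{\bs(k)}^{\mathsf{T}}$. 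Inserting these identities into the definition of $G^{(\epsilon)}_k(u,\bs)$ and grouping the consecutive scalar factors,
\[
  \overrightarrow{\prod}_{i=1}^{k}\bigl(\bm{p}_{\bs(i)}\bm{p}_{\bs(i)}^{\mathsf{T}}\,u^{M}\bigr)=\bm{p}_{\bs(1)}\Bigl(\textstyle\prod_{i=1}^{k-1}\bm{p}_{\bs(i)}^{\mathsf{T}}u^{M}\bm{p}_{\bs(i+1)}\Bigr)\bm{p}_{\bs(k)}^{\mathsf{T}}\,u^{M},
\]
the middle factor being a product of $k-1$ scalars (empty when $k=1$, in which case the statement is immediate).

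The core of the proof is therefore the scalar identity $\bm{p}_a^{\mathsf{T}}u^{M}\bm{p}_b=u^{-\mu}\,h_{a,b}(u^{2\mu})$ for $a,b\in\{0,1\}$. Since $M^2=(\Delta^2+\e^2)I=\mu^2 I$, we have $u^{M}=\tfrac12(u^{\mu}+u^{-\mu})I+\tfrac{1}{2\mu}(u^{\mu}-u^{-\mu})M$, so it suffices to evaluate the six numbers $\bm{p}_a^{\mathsf{T}}\bm{p}_b$, $\bm{p}_a^{\mathsf{T}}\sigma_z\bm{p}_b$ and $\bm{p}_a^{\mathsf{T}}\sigma_x\bm{p}_b$, each equal to $0$ or $\pm 2$ and read off at sight. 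Assembling them, rewriting $u^{\mu}\pm u^{-\mu}=u^{-\mu}(1\pm u^{2\mu})$ with the appropriate sign, and matching the result against the closed forms for $h_{0,0}$, $h_{1,1}$ and $h_{0,1}=h_{1,0}$ recorded in Example \ref{ex:funH} (evaluated at $\tau=u^{2\mu}$) gives the identity. Note this route uses only the eigenvectors of $J$ and not the matrix $C$, so it is valid uniformly in $\e$ (including $\e=0$).

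Substituting the scalar identity into the displayed telescoped product extracts the factor $u^{-(k-1)\mu}\prod_{i=1}^{k-1}h_{\bs(i),\bs(i+1)}(u^{2\mu})$ and leaves $\bm{p}_{\bs(1)}\bm{p}_{\bs(k)}^{\mathsf{T}}u^{M}=\mathbf{M}_k(\bs)\,u^{M}$; dividing by $2^{k}$ yields \eqref{eq:Gcomp}. An equivalent argument is induction on $k$: strip off the leading bracket and apply the same scalar identity in the form $[I+(-1)^{1-v}J]\,u^{M}\,\mathbf{M}_{ab}=(\bm{p}_v^{\mathsf{T}}u^{M}\bm{p}_a)\,\mathbf{M}_{vb}$. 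The only point requiring care is bookkeeping — the sign conventions in the $\mathbf{M}_{ij}$ and the order of the two indices of $h_{v,w}$ — but once the dictionary $I+(-1)^{1-v}J=\bm{p}_v\bm{p}_v^{\mathsf{T}}$, $\mathbf{M}_{ij}=\bm{p}_i\bm{p}_j^{\mathsf{T}}$ is fixed, every step is forced and elementary; there is no genuine obstacle.
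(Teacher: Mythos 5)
Your proof is correct and is essentially the paper's argument in different packaging: the rank-one identities $I+(-1)^{1-v}J=\bm{p}_v\bm{p}_v^{\mathsf T}$ and $\mathbf{M}_{ij}=\bm{p}_i\bm{p}_j^{\mathsf T}$ turn your scalar identity $\bm{p}_a^{\mathsf T}u^{M}\bm{p}_b=u^{-\mu}h_{a,b}(u^{2\mu})$ into exactly the paper's one-step relation $\mathbf{M}_{v,w}\,u^{M}\,(I+(-1)^{1-\rho}J)=u^{-\mu}h_{w,\rho}(u^{2\mu})\,\mathbf{M}_{v,\rho}$, which is then chained (equivalently, inducted) in the same way. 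The only minor added value is that computing $u^{M}$ from $M^2=\mu^2 I$ bypasses the conjugating matrix $C$, so your verification is uniform in $\e$, including $\e=0$ where the paper notes $C$ degenerates.
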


\begin{proof}
  For $v,w,\rho \in \{0,1\}$, it is enough to verify
  \[
    \mathbf{M}_{v,w} C^{-1} u^{-\mu \sigma_z} C (\bm{I} + (-1)^{1-\rho} \bm{J}) = \frac{h_{w,\rho}(u^{2\mu})}{u^{\mu}} \mathbf{M}_{v,\rho},
  \]
  and the main result then follows by induction.
\end{proof}

As in the case of QRM, we give a special definition for the scalar part of $G^{(\e)}_k(u,\bs)$ for later use.
\begin{dfn} 
  For \(k \geq 1 \), the function \(g_k(u,\bs) \) is given by
  \begin{equation*}
    g_k(u,\bs) = \frac{ \prod^{k-1}_{i=1}h_{s(i),s(i+1)}(u^{2\mu})}{u^{(k-1)\mu} 2^k}.
  \end{equation*}
\end{dfn}

Let us summarize the computation up to this point and write the limit expression corresponding to the heat kernel.
By definition, the heat kernel is given by
\[
  K_{\text{R}}^{(\e)}(x,y,t) = \lim_{N\to \infty} \sum_{\bm{s} \in \Z_2^{N}} G^{(\e)}_N(u^{1/N},\bs) I_N(x,y,u^{1/N}, \bs).
\]

In order to evaluate this limit, we transform it into a Riemann-type sum that we can later evaluate.
A key observation is that the matrices in the non-commutative part depend only on the first and last entry of the
vectors of $\Z_2^N$, thus it is convenient to define subsets where the first and last entries of its elements are fixed.
  
\begin{dfn} 
  Let \(N \in \Z_{\geq 1} \) and \(i,j \in \Z_2\). The subset
  \(\mathcal{C}^{(N)}_{i,j} \subset \Z_2^{N} \) is given by
  \[
    \mathcal{C}^{(N)}_{i,j} = \{ \bs \in \Z_2^{N} \, |\,  \bs(1) = i, \bs(N) = j \}.
  \]
\end{dfn}

In practice, we consider partitions of $\Z_2^N$ with fixed tails of ones or zeros and restrict the function
\(\bar{I}_N(x,y, u^{1/{N}},\bs)\) for such a subset. For instance, for \(k \geq 1\), we write
\begin{align*} 
  \bar{I}_N(x,y,u, \,\bar{\bs} \oplus \bm{0}_{N-k+1} ) &= J^{(k,N)}_0(x,y,u,g) R_0^{(k,N)}(u,g,\bar{\bs}),  \nonumber\\
  \bar{I}_N(x,y,u, \,\bar{\bs} \oplus \bm{1}_{N-k+1} ) &= J^{(k,N)}_1(x,y,u,g) R_1^{(k,N)}(u,g,\bar{\bs}),
\end{align*}
with functions \(J^{(k,N)}_\mu(x,y,u,g)\) and \( R_\mu^{(k,N)}(u,g,\bar{\bs}) \) for \(\mu \in \{0,1\}\) defined implicitly (see Def. 3.3 of \cite{RW2020hk} for the full expression). Note that \( \bar{\bs} \in \mathcal{C}^{(k-1)}_{i,1} \) and in the second line \( \bar{\bs} \in \mathcal{C}^{(k-1)}_{i,0} \) for \(i=0,1\).

With these preparations (see also Definition 2.1 in \cite{RW2020hk} for details), we obtain the limit expression for $K_{\text{R}}^{(\e)}(x,y,t)$, namely, it is equal to the sum of
\begin{align*}
    &  \frac12 K_0(x,y,g,u) \lim_{N \to \infty} \Bigg(  \left(\frac{h_{0,0}(u^{\frac{2\mu}N})}{2 u^{\frac{\mu}N}} \right)^{N-1} J^{(1,N)}_0(x,y,u^{\frac1N},g) \bm{M}_{0,0}  \\
  &\qquad \qquad \qquad \qquad \qquad +\left(\frac{h_{1,1}(u^{\frac{2\mu}N})}{2 u^{\frac{\mu}N}} \right)^{N-1} J^{(1,N)}_1(x,y,u^{\frac1N},g) \bm{M}_{1,1} \Bigg)
\end{align*}
and
\begin{align}
  \label{eq:limitfirst}
  &K_0(x,y,g,u) \lim_{N \to \infty} \left( \frac{h_{0,1}(u^{\frac{2\mu}N})}{2 u^{\frac{\mu}N}} \right) \nonumber \\
  &\times \sum_{i=0}^i \sum_{k \geq 2}^N \Bigg[ \left( \frac{h_{i,i}(u^{\frac{2\mu}N})}{2 u^{\frac{\mu}N}} \right)^{N-k} J_i^{(k,N)}(x,y,u^{\frac{1}N},g)\sum_{v=0}^1 \bm{M}_{v i} \sum_{ \bs \in \setC{k-1}{v}{1}} g_{k-1}(u^{\frac1N},\bs) R_i^{(k,N)}(u^{\frac{1}N},\bs)\Bigg) u^{\frac{M(\e,\Delta)}N}.
\end{align}

We notice that since $h_{0,0}(\tau) \neq h_{1,1}(\tau)$ for $\e \neq 0$, the expression of the limit for the AQRM splits according to the tail of zeros or ones. Actually, this difference turns out to be only apparent and the limit expression simplifies
later.

Next, we use the harmonic analysis in the groups $\Z_2^N$ to transform the sum above into one that can be evaluated as a Riemann sum by controlling the oscillation of the signs in the expression of $I_N(x,y,u, \bs)$. The main tool is the application of the Parseval identity
on the abelian groups $\Z_2^{k-3}$ to the sum
\begin{equation}
  \label{eq:sumPar}
  \sum_{ s \in \setC{k-1}{v}{w}} g_{k-1}(u^{\frac1N}, \bs) R_w^{(k,N)}(u^{\frac1N},\bs),
\end{equation}
by taking advantage of the fact that the set $\setC{k-1}{v}{w}$ may be equipped with a $\Z_2^{k-3}$ abelian group
structure.

The use of the Fourier transform also allows to identify the limit above as an orbit sum over the action of the
infinite symmetric group $\mathfrak{S}_\infty$ on the (induced limit of) finite groups $\Z_2^N$ for $N \geq 0$. We refer the
reader to \cite{RW2020z} for the description of this interpretation.

\subsection{Fourier analysis and limit evaluation}
\label{sec:four-analys-limit}

In order to evaluate sums of the type \eqref{eq:sumPar}, we need to describe the Fourier transform of \(g_k(u,\bs) \)
and $R_w^{(k,N)}(u,\bs)$. This is a necessary step in order to use Parseval identity and ultimately rewrite the limit
expression \eqref{eq:limitfirst} into one that can be evaluated as a Riemann integral. A good reference for the Fourier
transform on finite groups (and particularly in the abelian group $\Z_2^k$) can be found in \cite{Cecc2008}.

\begin{dfn} 
  Let \(\rho = (\rho_1,\rho_2,\cdots,\rho_k) \in \Z_2^{k} \). The function \(\vert \cdot \vert : \Z_2^{k} \to \C \) is
  given by
  \begin{equation*} 
    \vert \rho \vert = \| \rho \|_1 := \sum_{i=1}^{k} \rho_i.
  \end{equation*}  
  Let \(j_1 < j_2 < \cdots < j_{\vert \rho \vert} \) the position of the ones in \(\rho\), that is, \(\rho_{j_i} = 1\) for
  all \(i \in \{ 1,2,\cdots,\vert \rho \vert \}\) and if \(\rho_i =1\) then \(i \in \{j_1,j_2,\cdots,j_{\vert \rho \vert}\} \).
  The function  \( \varphi_k : \Z_2^{k} \to \C \) is given by
  \begin{equation*} 
    \varphi_k(\rho)  := \sum_{i=1}^{\vert \rho \vert} (-1)^{i-1} j_{\vert \rho \vert + 1 - i}
    = j_{\vert \rho \vert} - j_{\vert \rho \vert-1} +  \cdots + (-1)^{\vert \rho \vert-1} j_1,
  \end{equation*}
  and \( \varphi_k(\bZ) =0 \) where \(\Z_2 \) is the identity element in \(\Z_2^{k}\). For \(k=0\), define
  \(\varphi_k(\rho) = |\rho| = 0 \) where \(\rho\) is the unique element of \(\Z_2^{0}\).
\end{dfn}

We remark here that the function \(\varphi\) depends only on the numbers \(j_1,j_2,\cdots,j_\ell\) (the position of the ones) for a given vector \(\rho\)
with $|\rho|=\ell$.

In the case of the function \(g_k(u,\bs) \) we define a special notation. 

\begin{dfn} 
 Let \(v,w \in \{0,1\} \). Then, for \(\bs \in \Z_2^{k}\) with \(k \geq 1 \), define the function \(g^{(v,w)}_k(u,\bs)\) by
  \begin{equation*}
    g^{(v,w)}_k(u,\bs) := h_{v,s(1)}(u^{2\mu})h_{s(k),w}(u^{2\mu}) \prod^{k-1}_{i=1}h_{s(i),s(i+1)}(u^{2\mu}).
  \end{equation*}
  In addition, for \(\rho \in \Z_2^{0}\), define
  \[
    g_0^{(v,w)}(u,\bs) = h_{v,w}(u^{2\mu}).
  \]
\end{dfn}

For \(\bs \in \setC{k+2}{v}{w}\), we have
\begin{equation*} 
  2^{k+2} u^{(k+1)\mu} g_{k+2}(u,\bs) =  g^{(v,w)}_k(u,\bar{\bs}),
\end{equation*}
where for $s \in \Z^{k+2}_2$, $\bar{s} \in \Z_2^k$ is the projection of $s$ obtained by removing the first and
last component. Note that the degree of \(g^{(v,w)}_k(u,\bar{\bs})\) as a polynomial in \(u^{\mu}\) is  \(2(k+1)\).

Next, we obtain the Fourier transform of \(g^{(v,w)}_k(u,\bar{\bs})\). 

\begin{prop} 
  For \( \rho \in \Z_2^{k}\), we have
  \begin{equation*}
    \left[
      \widehat{g^{(v,0)}_k}(\rho), \,\,
      \widehat{g^{(v,1)}_k}(\rho) \right]
    = \left[ h_{0,v}(u^{2\mu}),\,\, h_{1,v}(u^{2\mu}) \right] \overrightarrow{\prod}_{i=1}^{k} \bm{B}(\rho_i),
  \end{equation*}
  where the matrix-valued function \(\bm{B}(s)\), for \(s \in \Z_2 \), is given by by
\[
  \bm{B}(s) =
  \begin{bmatrix}
    h_{0,0}(u^{2\mu}) & h_{0,1}(u^{2\mu}) \\
   (-1)^s h_{1,0}(u^{2\mu}) & (-1)^s h_{1,1}(u^{2\mu})
  \end{bmatrix}.
\]
\end{prop}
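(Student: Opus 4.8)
The plan is to compute the Fourier transform of the scalar function $g_k^{(v,w)}(u,\bs)$ over the group $\Z_2^k$ directly from its definition as a telescoping product of the functions $h_{a,b}(u^{2\mu})$, and to organize the sum over $\bs \in \Z_2^k$ against the character $(-1)^{\langle \rho, \bs\rangle}$ as an iterated matrix product by peeling off one coordinate of $\bs$ at a time. Concretely, recall that $\widehat{f}(\rho) = \sum_{\bs \in \Z_2^k} f(\bs) (-1)^{\langle\rho,\bs\rangle}$ (up to the normalization fixed in the paper), and that
\[
  g_k^{(v,w)}(u,\bs) = h_{v,s(1)} \, h_{s(k),w} \prod_{i=1}^{k-1} h_{s(i),s(i+1)},
\]
where I suppress the argument $u^{2\mu}$. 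The key structural point is that if I treat the pair of values $\bigl(g_k^{(v,0)}(u,\bs),\, g_k^{(v,1)}(u,\bs)\bigr)$ as a row vector indexed by the ``boundary label'' $w \in \{0,1\}$, then summing over the last coordinate $s(k) \in \{0,1\}$ against $(-1)^{\rho_k s(k)}$ converts the row vector indexed by $s(k)$ into a new row vector indexed by $s(k-1)$, via left-multiplication (on the right, in the ordered product $\overrightarrow{\prod}$) by exactly the $2\times 2$ matrix $\bm B(\rho_k)$: its $(a,b)$ entry is $(-1)^{\rho_k \cdot a} h_{a,b}(u^{2\mu})$, which encodes ``the previous coordinate had value $a$, the current value had value... '' — the sign $(-1)^{\rho_k}$ appearing only in the row $a=1$ because $(-1)^{\rho_k \cdot a}$ is trivial when $a=0$.

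The proof itself would proceed by induction on $k$. For the base case I would take $k=0$: there $g_0^{(v,w)}(u,\bs) = h_{v,w}(u^{2\mu})$ and the claimed formula reads $\bigl[\widehat{g_0^{(v,0)}}(\rho),\,\widehat{g_0^{(v,1)}}(\rho)\bigr] = [h_{0,v},h_{1,v}]$ with an empty product — but note the index swap $v \leftrightarrow$ boundary, so I should instead start the induction at $k=1$ (or check $k=0$ carefully against the convention $\varphi_0 = |\cdot| = 0$, using the symmetry $h_{a,b}=h_{b,a}$ from Example \ref{ex:funH}). For the inductive step, I would write, for $\rho \in \Z_2^k$ and $\bs = (\bar{\bs}, s(k))$ with $\bar{\bs} \in \Z_2^{k-1}$,
\[
  \widehat{g_k^{(v,w)}}(\rho) = \sum_{\bar{\bs}} \sum_{s(k)=0}^{1} (-1)^{\rho_k s(k)} (-1)^{\langle \bar\rho,\bar\bs\rangle}\, h_{v,s(1)} \Biggl(\prod_{i=1}^{k-2} h_{s(i),s(i+1)}\Biggr) h_{s(k-1),s(k)}\, h_{s(k),w},
\]
carry out the inner sum over $s(k)$ to produce $\sum_{a} (-1)^{\rho_k a} h_{s(k-1),a} h_{a,w}$, recognize this as the $w$-th component of the row vector obtained by multiplying $[h_{s(k-1),0}\ \cdots]$—wait, more precisely recognize the $(s(k-1),w)$-indexed quantity $\sum_a (-1)^{\rho_k a} h_{s(k-1),a} h_{a,w}$ as exactly the matrix entry defining right-multiplication by $\bm B(\rho_k)$, and fold the remaining sum over $\bar\bs$ into $\widehat{g_{k-1}^{(v,\cdot)}}(\bar\rho)$ by the inductive hypothesis. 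Tracking the $w=0,1$ components simultaneously as a row vector is what makes the bookkeeping clean and produces the stated ordered product $\overrightarrow{\prod}_{i=1}^k \bm B(\rho_i)$.

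The main obstacle I anticipate is purely notational rather than conceptual: getting the orientation of the ordered product right (whether coordinate $1$ or coordinate $k$ of $\rho$ sits at the left end of $\overrightarrow{\prod}$), and correctly matching the boundary labels $v$ and $w$ to the rows versus columns of the $\bm B(\rho_i)$. Because $h_{a,b}$ is symmetric in $a,b$ (Example \ref{ex:funH}) there is a genuine risk of silently transposing something; I would pin down the convention by explicitly verifying $k=1$ by hand, where $\widehat{g_1^{(v,w)}}(\rho_1) = \sum_{s(1)} (-1)^{\rho_1 s(1)} h_{v,s(1)} h_{s(1),w}$ must equal the $(w+1)$-th entry of $[h_{0,v},h_{1,v}]\bm B(\rho_1)$, i.e. $h_{0,v}h_{0,w} + (-1)^{\rho_1} h_{1,v} h_{1,w}$ — which matches after using $h_{v,0}=h_{0,v}$, etc. Once the $k=1$ case is confirmed, the induction is routine linear algebra and the telescoping structure of $g_k^{(v,w)}$ does the rest.
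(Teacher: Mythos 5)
Your proposal is correct and follows essentially the same route as the paper: an induction on $k$ that peels off the last coordinate of $\bs$, recognizes the resulting sum over $s(k)$ against $(-1)^{\rho_k s(k)}$ as right multiplication of the row vector $[\widehat{g^{(v,0)}_{k-1}}(\delta),\,\widehat{g^{(v,1)}_{k-1}}(\delta)]$ by $\bm{B}(\rho_k)$, and uses the symmetry $h_{a,b}=h_{b,a}$ to settle the base case. Your explicit $k=1$ check is a sensible safeguard against the transposition pitfalls you identify, but the argument is the paper's.
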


\begin{proof}
  The case \(k=0 \) is trivial. For \(k \geq 1 \), let \( \rho = (\rho_1,\rho_2,\cdots,\rho_k) \in \Z_2^{k} \)
  and \( \delta = (\rho_1,\rho_2,\cdots,\rho_{k-1}) \in \Z_2^{k-1}\), then we have
  \begin{align*}
    &\left[
      \widehat{g^{(v,0)}_k}(\rho), \,\, \widehat{g^{(v,1)}_k}(\rho) \right] = \left[\sum_{\bm{s}\in \Z_2^{k+1}} g_{k+1}^{(v,0)}(\bs) \chi_{\rho} (\bm{s}) , \sum_{\bm{s} \in \Z_2^{k+1}} g_{k+1}^{(v,1)}(\bs) \chi_{\rho} (\bm{s})  \right] \\
    &\qquad = \left[ h_{0,0}(u^{2\mu}) \widehat{g^{(v,0)}_{k-1}}(\delta) + (-1)^{\rho_k} h_{0,1}(u^{2\mu}) \widehat{g^{(v,1)}_{k-1}}(\delta)  , h_{0,1}(u^{2\mu}) \widehat{g^{(v,0)}_{k-1}}(\delta) + (-1)^{\rho_k} h_{1,1}(u^{2\mu}) \widehat{g^{(v,1)}_{k-1}}(\delta)  \right] \\
    &\qquad = [\widehat{g^{(v,0)}_{k-1}}(\delta), \,\, \widehat{g^{(v,1)}_{k-1}}(\delta) ] \bm{B}(\rho_{k+1}),
  \end{align*}
  and the result follows by induction.
\end{proof}

\begin{rem}
  Notice that in the formula for the Fourier transform of the function $\widehat{g^{(v,0)}_k}(\rho)$ the function $\varphi(s)$
  does not appear directly as in the case of the QRM.
\end{rem}

Rewriting, we obtain
\[
  \widehat{g^{(v,w)}_k}(\rho) = [h_{0,v}(u^{2\mu}), h_{1,v}(u^{2\mu}) ] \, \overrightarrow{\prod}_{i=1}^{k-1} \bm{B}(\rho_i)
\begin{bmatrix}
  h_{0,w}(u^{2\mu}) \\
  (-1)^{\rho_k} h_{1,w}(u^{2\mu})
\end{bmatrix}.
\]

Next, we consider the sum
\[
  \sum_{\rho \in \Z_2^{k}} \widehat{g^{(v,w)}_k}(\rho) \prod_{i=1}^k(A_i)^{\rho_i},
\]
for \(A_i \in \C\). In this case, while the result resembles the case of the QRM, the method of proof is different. In particular, the matrices in the expression of $\widehat{g^{(v,w)}_k}(\rho)$ correspond in some sense to the recurrence relation appearing in the QRM case.
In the next expression, for a vector \(\rho = (\rho_1,\rho_2,\cdots,\rho_k) \in \Z_2^k\) with \(|\rho|=\ell \), the numbers \( j_1 < j_2 <\cdots < j_\ell\) correspond to the positions of the ones in \(\rho\).  

\begin{prop} \label{prop:fouriersum}
  We have
  \begin{align*}
    \sum_{\rho \in \Z_2^{k}} \widehat{g^{(v,w)}_k}(\rho) \prod_{i=1}^k(A_i)^{\rho_i} = &\sum_{\ell=0}^{k} h_{v, \ell+ w}(u^{2\mu})(h_{0,1}(u^{2\mu}))^\ell 
                                                                       (h_{1-w,1-w}(u^{2\mu}))^{k-\ell}  \\
    &\quad \times\sum_{\substack{ \rho \in \Z_{2}^k \\ |\rho|=\ell}} \left(\frac{h_{w,w}(u^{2\mu})}{h_{1-w,1-w}(u^{2\mu})}\right)^{\alpha(\rho)}
  \prod_{i=0}^{\ell} \left( \prod_{n= j_i +1}^{j_{i+1}} (1+(-1)^{w+\ell+i} A_n) \right),
  \end{align*}
  where \(j_0:= 0 \) and \(j_{\ell+1}:=k \). The function \(\alpha(s) \) is given by
  \[
     \alpha(s) = k - \left\lfloor \frac{|s|}2 \right\rfloor - \varphi(s).
  \]
\end{prop}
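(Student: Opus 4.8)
The plan is to evaluate the sum $\sum_{\rho \in \Z_2^k} \widehat{g^{(v,w)}_k}(\rho) \prod_{i=1}^k (A_i)^{\rho_i}$ by substituting the product formula for $\widehat{g^{(v,w)}_k}(\rho)$ obtained in the previous proposition, so that the left-hand side becomes
\[
  [h_{0,v}(u^{2\mu}),\, h_{1,v}(u^{2\mu})] \left( \overrightarrow{\prod}_{i=1}^{k-1} \bm{B}(\rho_i) A_i^{\rho_i} \right)_{\text{summed}} \begin{bmatrix} h_{0,w}(u^{2\mu}) \\ (-1)^{\rho_k} h_{1,w}(u^{2\mu}) \end{bmatrix},
\]
and then to recognize that summing over all $\rho$ amounts to summing the matrix products $\bm{B}(0) + A_i \bm{B}(1)$ at each position (with care at position $k$ because of the $(-1)^{\rho_k}$ factor coupling to the right-hand vector). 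Concretely, I would first absorb the sign $(-1)^{\rho_k}$ on the last coordinate into a modified rightmost vector or, equivalently, note that $\bm{B}(s)$ already carries a $(-1)^s$ in its second row, so the telescoping is uniform. The key algebraic fact to exploit is that $\bm{B}(0)$ and $\bm{B}(1)$ differ only by the sign of their bottom row; writing $\bm{B}(s) = \bm{D}^s \bm{B}(0)$ (with $\bm{D} = \Diag(1,-1)$ acting appropriately, up to checking which side) lets me convert the alternating signs into the function $\varphi$ exactly as in \cite{RW2020hk}.

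The second step is the combinatorial bookkeeping: I would group the $2^k$ vectors $\rho$ by their weight $\ell = |\rho|$ and, within a fixed weight, by the positions $j_1 < \cdots < j_\ell$ of the ones. For each such configuration the matrix product $\overrightarrow{\prod}_{i=1}^k \bm{B}(\rho_i)$ (with the convention at $i=k$) collapses: on the blocks of consecutive zeros between $j_i$ and $j_{i+1}$ one gets a power of $h_{1-w,1-w}(u^{2\mu})$ or $h_{w,w}(u^{2\mu})$ depending on the parity of the index $i$ (this is where the factor $(h_{w,w}/h_{1-w,1-w})^{\alpha(\rho)}$ and the alternation $(-1)^{w+\ell+i}$ in the inner product over $A_n$ originate), and crossing each $j_i$ contributes one factor of the off-diagonal entry $h_{0,1}(u^{2\mu})$, giving the $(h_{0,1}(u^{2\mu}))^\ell$ prefactor. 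The endpoint contributions $h_{0,v}, h_{1,v}$ on the left and $h_{0,w}, h_{1,w}$ on the right then combine into the single factor $h_{v,\ell+w}(u^{2\mu})$ after tracking parity — here I should double-check the index $\ell + w$ is read mod $2$ consistently with the $h$-notation, which is why Example \ref{ex:funH} (periodicity/symmetry of $h_{v,w}$) is invoked. I would organize this as an induction on $k$, peeling off the last coordinate $\rho_k$ and matching with the recurrence satisfied by the right-hand side, since that mirrors the $\bm{B}$-recursion from the preceding proposition.

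The third step is to identify the exponent $\alpha(\rho) = k - \lfloor |\rho|/2 \rfloor - \varphi(\rho)$. This is the count of how many of the "zero-blocks" land in the slots that produce $h_{w,w}$ rather than $h_{1-w,1-w}$; summing the block lengths with alternating signs is exactly what $\varphi(\rho) = j_\ell - j_{\ell-1} + \cdots$ computes, and the floor term accounts for the parity of $\ell$ determining whether the first or last block is "long." I expect the main obstacle to be precisely this sign/parity accounting: keeping straight, uniformly in the parity of $\ell$ and in the value of $w \in \{0,1\}$, which consecutive run of zeros contributes which diagonal entry, and ensuring the $(-1)^{w+\ell+i}$ inside $\prod_{n=j_i+1}^{j_{i+1}}(1 + (-1)^{w+\ell+i} A_n)$ has its index $i$ aligned with the same convention used in defining $\varphi$ and $\alpha$. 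Once the $k \to k-1$ reduction is set up cleanly with the correct boundary conventions $j_0 = 0$, $j_{\ell+1} = k$, the induction closes and the formula follows; the bulk of the work is verifying the base case $k=0$ (where $\varphi = |\rho| = 0$ by definition and the statement reads $\widehat{g_0^{(v,w)}} = h_{v,w}(u^{2\mu})$, matching the earlier definition) and the single-step matrix identity for $\bm{B}$.
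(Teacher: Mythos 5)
Your first and last steps coincide with the paper's proof: the identity $\bm{B}(0)+A_i\bm{B}(1)=\bm{D}(A_i)\bm{B}(0)$ with $\bm{D}(x)=\Diag(1+x,1-x)$ (your $\bm{B}(1)=\Diag(1,-1)\bm{B}(0)$ is the same observation) is exactly how the paper collapses the sum over $\rho$ into the single ordered product $\overrightarrow{\prod}_{i=1}^{k}\bm{D}(A_i)\bm{B}(0)$, and the closed form $\alpha(s)=k-\lfloor|s|/2\rfloor-\varphi(s)$ is likewise obtained there by matching recurrences, namely $\alpha(s\oplus(0))=\alpha(s)+1$ and $\alpha(s\oplus(1))=k-|s|-\alpha(s)$.

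The gap is in your middle step. You assert that for a fixed configuration $\rho$ with ones at $j_1<\cdots<j_\ell$ the product $\overrightarrow{\prod}_{i}\bm{B}(\rho_i)$ ``collapses'' into powers of diagonal entries on the zero-blocks times one off-diagonal factor per crossing. It does not: $\bm{B}(0)$ is a full $2\times 2$ matrix, so $\bm{B}(0)^{m}$ is a sum over all entry-paths, not a monomial in the $h$'s. More importantly, the $\rho$ indexing the right-hand side of the proposition is not the $\rho$ of the left-hand sum; it records, for each of the $k$ copies of $\bm{B}(0)$ in $\overrightarrow{\prod}_{i}\bm{D}(A_i)\bm{B}(0)$, whether the off-diagonal part $\tfrac{b}{\mu}\bm{J}$ or the diagonal part $\tfrac{1}{\mu}(a\bm{I}+c\bm{K})$ is selected when that product is re-expanded --- which is also why every $A_n$ appears on the right inside $1+(-1)^{w+\ell+i}A_n$, whereas on the left only those $A_i$ with $\rho_i=1$ occur. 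The paper makes this precise by writing $\bm{B}(0)=\tfrac1\mu(a\bm{I}+b\bm{J}+c\bm{K})$ and using the commutation relations $\bm{J}\bm{D}(x)=\bm{D}(-x)\bm{J}$ and $\bm{K}\bm{D}(x)=\bm{D}(x)\bm{K}$ to push all the $\bm{D}$'s to one side; the accumulated sign flips across the chosen $\bm{J}$-positions are what produce $\varphi(\rho)$ and $\alpha(\rho)$. Your fallback of inducting on $k$ by peeling off the last coordinate and matching the $\alpha$-recursion on the right-hand side would also close the argument and avoids the faulty ``collapse,'' but as written the combinatorial step must be replaced by one of these two mechanisms.
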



\begin{proof}
  First, let us define the function
  \[
    F_k(\tau) = \sum_{\rho \in \Z_2^k}  \overrightarrow{\prod}_{i=1}^{k} A_i^{\rho_i} \bm{B}(\rho_i).
  \]
  Clearly, we have
  \[
    F_k(\tau) = F_{k-1}(\tau) \left( \bm{B}(0) +  A_i\bm{B}(1) \right),
  \]
  and, by the elementary identity,
  \[
    \left( \bm{B}(0) +  A_i\bm{B}(1) \right) = \bm{D}(A_i) \bm{B}(0),
  \]
  with
  \[
    \bm{D}(x) :=
    \begin{bmatrix}
      1 + x & 0 \\
      0 & 1 - x
    \end{bmatrix},
  \]
  it follows that
  \[
    F_k(\tau) = \overrightarrow{\prod}_{i=1}^{k} \bm{D}(A_i) \bm{B}(0).
  \]
  Next, we write
  \[
    \bm{B}(0) = \frac{1}{\mu}(a \bm{I} + b \bm{J} + c \bm{K}),
  \]
  with
  \[
    a =  \mu(1+u^{2\mu}), \quad b= \Delta (1- u^{2\mu}), \quad c = \e (1-u^{2\mu}),
  \]
  and
  \[
    \bm{K} := \sigma_z = 
    \begin{bmatrix}
      1 & 0 \\
      0 & -1 
    \end{bmatrix},
  \]
  and then
  \[
    F_k(\tau) = \frac{1}{\mu^{k}} \overrightarrow{\prod}_{i=1}^{k} \bm{D}(A_i) \left( a \bm{I} + b \bm{J} + c \bm{K} \right),
  \]
  and using the commutation relations
  \[
    \bm{I} \bm{D}(x) = \bm{D}(x) \bm{I}, \qquad \bm{J} \bm{D}(x) = \bm{D}(-x) \bm{J}, \qquad \bm{K} \bm{D}(x) = \bm{D}(x) \bm{K},
  \]
  we obtain
  \begin{align} \label{eq:fk}
    F_k(\tau) = \frac{1}{\mu^{k}} \sum_{s \in \Z_2^k} (b\bm{J})^{|\rho|} (a \bm{I} - b \bm{K})^{k-|\rho|-\alpha(s)}(a \bm{I} + b \bm{K})^{\alpha(s)} \bm{D}
    \left(\prod_{i=0}^{\ell} \left( \prod_{n= j_i +1}^{j_{i+1}} ((-1)^{\ell+i} A_n)\right)\right),
  \end{align}
  with $\alpha(s)$ given by $\alpha((0)) := 1$, $\alpha((1)):= 0$ and recursively as
  \begin{align*}
    \alpha(s\oplus(0)) &= \alpha(s) +1, \\
    \alpha(s\oplus(1)) &= k - |s| - \alpha(s)
  \end{align*}
  for $s \in \Z_2^{k+1}$. We verify directly that $\alpha(s) = k - \left\lfloor \frac{|s|}2 \right\rfloor - \varphi(s) $ by comparing the recurrence relations
  on both sides.
  Next, we write
  \begin{align*}
    \sum_{\rho \in \Z_2^{k}} &\widehat{g^{(v,w)}_k}(\rho) \prod_{i=1}^k(A_i)^{\rho_i}  = [h_{0,v}(u^{2\mu}), h_{1,v}(u^{2\mu}) ] \sum_{\rho \in \Z_2^{k}} \, \overrightarrow{\prod}_{i=1}^{k} \bm{B}(\rho_i) A_i^{\rho_i}
                       \begin{bmatrix}
                         \delta_0(w) \\
                         \delta_1(w)
                       \end{bmatrix},
  \end{align*}
  and the result is obtained by substituting equation \eqref{eq:fk} and multiplying out the matrices.
\end{proof}

It remains to compute the Fourier transform of the function $R^{(v,w)}_\mu(\bs)$. Here, since there parameter $\e$ does not appear there is no change from the computation in the QRM case. Let us just introduce some of the basic notations
and refer to the appropriate results when needed. We write the function $R^{(v,w)}_{\mu}(\bs)$
\[
  R^{(v,w)}_{\mu}(\bs) = \exp\left( a_0^{(\mu)} + \sum_{\chi \in S_{k-3}} a_\chi^{(\mu)} \chi(\bs) \right),
\]
where the coefficients $a_\chi^{(\mu)}$ are implicitly define (See Lemma 3.5 of \cite{RW2020hk} for the explicit form).
Then, the Fourier transform of $R^{(v,w)}_{\mu}(\bs)$ is given by
\begin{equation*} 
  \widehat{R^{(v,w)}_{\mu}}(\bs) =  2^{k-3 }\exp(a_0^{(\mu)} ) \left(\sum_{\xi \in \widetilde{\Z_2^{k-3}}} C_\xi^{(\mu)} \delta_{\xi,\chi_{\rho}}\right).
\end{equation*}
where
\[
  C_{\chi_\rho } = \sum_{\substack{\br \in \{0,1\}^{\ell} \\ \chi_\rho = \prod_{i=1}^{\ell} (\chi_i)^{\br_i}}} T^{(\br)}(\mathbf{a}).
\]
Here, for \(\mathbf{a} \in \C^{\ell} \) and an index vector \(\br \in \{0,1\}^\ell \) we define
\begin{align*} 
  T^{(\br)}(\mathbf{a}) &:= \prod_{i=1}^{\ell} \left[ \cosh(\mathbf{a}_i)^{(1-\br_i)}\sinh(\mathbf{a}_i)^{\br_i} \right], 
\end{align*}
where \(\mathbf{a}_i \) (resp. \(\br_i\)) denotes the \(i\)-th component of \( \mathbf{a}\) (resp. \(\br \)).
We remark that the particular form of the $T^{(\br)}(\mathbf{a})$ function is a consequence of the fact that all characters
of the group $\Z_2^{k}$ are real characters.

We now return to the evaluation of the heat kernel. Let $\eta \in \{0,1\}$, then, by Parseval identity, we have
\begin{align*}
  \sum_{\bs \in \setC{k-1}{v}{w}} g_{k-1}(\bs)  R^{(k,N)}_\eta(\bs)&= \frac{1}{2^{k-1} u^{(k-2)\mu}} \sum_{\bs \in \Z_2^{k-3}} g^{(v,w)}_{k-3}(\bs) R^{(v,w)}_\mu(\bs) \\
              &= \frac{1}{2^{2(k-1)} u^{(k-2)\Delta }} \sum_{\rho \in \Z_2^{k-3}} \widehat{g^{(v,w)}_{k-3}}(\rho) \widehat{R_\mu^{(v,w)}}(\rho) \\
         &= \frac{1}{2^{(k-1)} u^{(k-2)\mu }} \exp(a_0^{(\eta)}) \sum_{\rho \in \Z_2^{k-3}} \widehat{g^{(v,w)}_{k-3}}(\rho) \sum_{\br \in V^{(k-3)}_{\bZ}}  T^{(\sigma_{\rho}(\br))}(\mathbf{a}^{(\eta)}),
\end{align*}
notice that the extra $2^{(k-1)}$ in the determinant corresponds to the factors $h_{\alpha,\beta}(u^{2\mu})$ ($\alpha,\beta \in \{0,1\}$) inside the
matrices $B(\rho)$ of $\widehat{g^{(v,w)}_{k-3}}(\rho)$. Next, we have
\begin{align*}
  \sum_{\rho \in \Z_2^{k-3}} \widehat{g^{(v,w)}_{k-3}}(\rho) \sum_{\bm{r} \in V^{(k-3)}_{\bm{0}}} T^{(\sigma_{\rho}(\br))}(\mathbf{a}^{(\eta)})  = \sum_{\bm{r} \in V^{(k-3)}_{\bm{0}}} T^{(\bm{r})}(\mathbf{a}^{(\eta)}) \sum_{\rho \in \Z_2^{k-3}}  \widehat{g^{(v,w)}_{k-3}}(\rho) \prod_{i=1}^{k-3} \left( \tanh(a_i^{(\eta)})^{1-r_{0 i}}\right)^{\rho_i},
\end{align*}
and we apply Proposition \ref{prop:fouriersum}, to get
\begin{align*}
  & \sum_{\br \in V^{(k-3)}_{\bZ}} T^{(\br)}(\mathbf{a}^{(\eta)}) \sum_{\ell=0}^{k-3} h_{v, \ell+ w}(u^{2\mu})(h_{0,1}(u^{\frac{2\mu}N}))^\ell 
    (h_{1-w,1-w}(u^{2\mu}))^{k-3-\ell}  \\
  & \qquad \qquad \qquad \qquad \qquad \times\sum_{\substack{ \rho \in \Z_{2}^{k-3} \\ |\rho|=\ell}} \left(\frac{h_{w,w}(u^{2\mu})}{h_{1-w,1-w}(u^{2\mu}}\right)^{\alpha(\rho)}   
  \prod_{i=0}^{\ell} \left( \prod_{n= j_i +1}^{j_{i+1}} (1+(-1)^{w+\ell+i} A_n^{(r)}) \right) \\
  =& \sum_{\ell=0}^{k-3} h_{v, \ell+ w}(u^{2\mu})(h_{0,1}(u^{\frac{2\mu}N}))^\ell  (h_{1-w,1-w}(u^{2\mu}))^{k-3-\ell}  \\
  & \qquad \qquad \qquad  \times \sum_{\substack{ \rho \in \Z_{2}^{k-3} \\ |\rho|=\ell}} \left(\frac{h_{w,w}(u^{2\mu})}{h_{1-w,1-w}(u^{2\mu}}\right)^{\alpha(\rho)}   
  \sum_{\br \in V^{(k-3)}_{\bm{Z}}} T^{(\br)}(\mathbf{a}^{(\eta)})  \prod_{i=0}^{\ell} \left( \prod_{n= j_i +1}^{j_{i+1}} (1+(-1)^{v_0 + v_i} A_n^{(r)}) \right),
\end{align*}
and by using Lemma 3.14 of \cite{RW2020hk}, we obtain
\begin{align*}
  &\sum_{\substack{ \rho \in \Z_{2}^{k-3}}} h_{v, |\rho|+ w}(u^{2\mu}) \left( h_{1-w,1-w}(u^{2\mu})\right)^{k-3} \left(\frac{h_{0,1}(u^{\frac{2\mu}N})}{h_{w,w}(u^{2\mu})} \right)^{|\rho|}
  \\
  & \qquad \left(\frac{h_{w,w}(u^{2\mu})}{h_{1-w,1-w}(u^{2\mu})}\right)^{\alpha(\rho)}\exp\left(\sum_{m=0}^{k-4}  \sum_{j=1}^{k-3-m} (-1)^{(|\rho|+w)\delta_0(m) + \sum_{i=m}^{m+j-1} \rho_i} a_{m, m+j}^{(\eta)} \right).
\end{align*}
 Summing, up, we see that 
 \begin{align}
   \label{eq:dualsum}
  \sum_{\bs \in \setC{k-1}{v}{w}} &g_{k-1}(\bs)  R^{(k,N)}_\eta(\bs) \nonumber \\
  =& \frac{1}{2^{(k-1)} u^{(k-2)\mu }} \left(  h_{1-w,1-w}(u^{2\mu})\right)^{k-3} \sum_{\substack{ \rho \in \Z_{2}^{k-3}}} h_{v, |\rho|+ w}(u^{2\mu})
     \left(\frac{h_{0,1}(u^{\frac{2\mu}N})}{ h_{1-w,1-w}(u^{2\mu})}\right)^{|\rho|} 
   \nonumber \\
  & \qquad \left(\frac{ h_{w,w}(u^{2\mu})}{ h_{1-w,1-w}(u^{2\mu})}\right)^{\alpha(\rho)}\exp\left(a_0^{(\eta)} + \sum_{m=0}^{k-4}  \sum_{j=1}^{k-3-m} (-1)^{(|\rho|+w)\delta_0(m) + \sum_{i=m}^{m+j-1} \rho_i} a_{m, m+j}^{(\eta)} \right).
\end{align}

\subsubsection{Second limit expression and final computation} 

In the next step, we reformulate the limit expression \eqref{eq:limitfirst} of the heat kernel using the Fourier
analysis developed in the previous section. As a result, in the dual space the sign changes are controlled when we fix the length of the vectors in the the Fourier transformed expression, thus allowing us to evaluate the limit as a Riemann integral.

First, by an elementary computation we see that
\begin{equation}
  \label{eq:loglim}
    \log\left(\frac{h_{i,i}(e^{-t\frac{2\mu}N})}{2 e^{-t\frac{\mu}N}} \right) = (-1)^i \frac{t \e}{N} + O\left(\tfrac{1}{N^2} \right),
\end{equation}
for $i=0,1$ and thus we have the limits
\[
  \lim_{N\to\infty} \left(\frac{h_{0,0}(u^{\frac{2\mu}N})}{2 u^{\frac{\mu}N}} \right)^{N-1} = u^{-\e}, \qquad \lim_{N\to\infty} \left(\frac{h_{1,1}(u^{\frac{2\mu}N})}{2 u^{\frac{\mu}N}} \right)^{N-1} = u^{\e}.
\]
Using these limits, we rewrite the limit expression for the heat kernel as
\begin{align}
  \label{eq:secondlimit}
  & K_0(x,y,g,u) \Bigg\{ e^{-2g^2 \frac{1-e^{-t}}{1+e^{-t}}}
                     \begin{bmatrix} \cosh & -\sinh
                       \\ -\sinh & \cosh
                     \end{bmatrix}
                                   \Big( \sqrt2 g(x+y)\frac{1-e^{-t}}{1+e^{-t}} + t \e \Big) \nonumber  \\
  & + \frac{u^{-\e}}{2} \lim_{N \to \infty} \left( \frac{h_{0,1}(u^{\frac{2\mu}N})}{2 u^{\frac{\mu}N}} \right) \sum_{k \geq 2}^N J_0^{(k,N)}(x,y,u^{\frac{1}N},g)
     \sum_{\substack{ \rho \in \Z_{2}^{k-3}}} \frac{1}{u^{\frac{\mu}N}}
    \begin{bmatrix}
      \alpha_{|\rho|+1}(u^{\frac{2\mu}N}) & -\alpha_{|\rho|+1}(u^{\frac{2\mu}N})\\
      -\beta_{|\rho|+1}(u^{\frac{2\mu}N}) & \beta_{|\rho|+1}(u^{\frac{2\mu}N})
    \end{bmatrix}
       \nonumber \\
  &\, \times  \left(\frac{h_{0,1}(u^{\frac{2\mu}N})}{h_{0,0}(u^{\frac{2\mu}N})}\right)^{|\rho|} \left(\frac{h_{1,1}(u^{\frac{2\mu}N})}{h_{0,0}(u^{\frac{2\mu}N})}\right)^{\alpha(\rho)} \exp{\left(a_0^{(0)} + \sum_{m=0}^{k-4}  \sum_{j=1}^{k-3-m} (-1)^{(|\rho|+w)\delta_0(m) + \sum_{i=m}^{m+j-1} \rho_i} a_{m, m+j}^{(0)} \right)} \nonumber \\
  & + \frac{u^{\e}}{2} \lim_{N \to \infty} \left( \frac{h_{0,1}(u^{\frac{2\mu}N})}{2 u^{\frac{\mu}N}} \right) \sum_{k \geq 2}^N
     J_1^{(k,N)}(x,y,u^{\frac{1}N},g)
     \sum_{\substack{ \rho \in \Z_{2}^{k-3}}} \frac{1}{u^{\frac{\mu}N}}
  \begin{bmatrix}
     -\alpha_{|\rho|}(u^{\frac{2\mu}N}) & -\alpha_{|\rho|}(u^{\frac{2\mu}N}) \\
      \beta_{|\rho|}(u^{\frac{2\mu}N}) & \beta_{|\rho|}(u^{\frac{2\mu}N})
    \end{bmatrix}
    \nonumber \\
 &\, \times  \left(\frac{h_{0,1}(u^{\frac{2\mu}N})}{h_{1,1}(u^{\frac{2\mu}N})}\right)^{|\rho|} \left(\frac{h_{0,0}(u^{\frac{2\mu}N})}{h_{1,1}(u^{\frac{2\mu}N})}\right)^{\alpha(\rho)}  \exp\left(a_0^{(1)} + \sum_{m=0}^{k-4}  \sum_{j=1}^{k-3-m} (-1)^{(|\rho|+w)\delta_0(m) + \sum_{i=m}^{m+j-1} \rho_i} a_{m, m+j}^{(1)} \right) \Bigg\}.
\end{align}
where
\[
  \alpha_x(\tau) = \frac{1}{2}(h_{0,x}(\tau) - h_{1,x}(\tau)), \qquad \qquad \beta_x(\tau) = \frac1{2}(h_{0,x}(\tau) + h_{1,x}(\tau)).
\]

Next, we simplify the expressions appearing in the limit above, starting with the matrices, to complete the computation of
the heat kernel . Note that since
\[
  \lim_{N\to \infty} h_{w,w}(u^{\frac{2\mu}N}) = 1, \qquad \lim_{N\to \infty} h_{w,1-w}(u^{\frac{2\mu}N}) = 0,
\]
we have
\[
  \lim_{N \to \infty} \frac{1}{u^{\frac{\mu}N}}
    \begin{bmatrix}
      \alpha_{|\rho|+1}(u^{\frac{2\mu}N}) & -\alpha_{|\rho|+1}(u^{\frac{2\mu}N})\\
      -\beta_{|\rho|+1}(u^{\frac{2\mu}N}) & \beta_{|\rho|+1}(u^{\frac{2\mu}N})
    \end{bmatrix}  =
    \begin{bmatrix}
      (-1)^{|\rho|+1} & (-1)^{|\rho|}\\
      -1 & 1
    \end{bmatrix},
\]
and
\[
  \lim_{N \to \infty}
  \frac{1}{u^{\frac{\mu}N}}
  \begin{bmatrix}
    -\alpha_{|\rho|}(u^{\frac{2\mu}N}) & -\alpha_{|\rho|}(u^{\frac{2\mu}N}) \\
    \beta_{|\rho|}(u^{\frac{2\mu}N}) & \beta_{|\rho|}(u^{\frac{2\mu}N})
  \end{bmatrix}  =
  \begin{bmatrix}
    (-1)^{|\rho|+1} & -(-1)^{|\rho|}\\
    1 & 1
  \end{bmatrix},
\]
and at the limit we see that the resulting matrices are the same that appear in the  QRM heat kernel.

Next, we consider the limit of the quotients of the functions $h_{w,v}(u)$. For $\ell \in \R$ fixed and $w \in \{0,1\}$,
we see directly that
\[
  \left(\frac{h_{w,w}(e^{-t\frac{2\mu}N})}{h_{1-w,1-w}(e^{-t\frac{2\mu}N})}\right)^{\ell} = e^{(-1)^w 2 t \ell \e / N} +  O\left(\frac1N\right),
\]
In particular, for $w \in \{0,1\}$ we have
\[
  \lim_{N\to \infty} \left(\frac{h_{w,w}(u^{\frac{2\mu}N})}{h_{1-w,1-w}(u^{\frac{2\mu}N})}\right)^{\ell} = 1.
\]
Similarly, for $\ell \in \R$ we observe that
\[
  \left(\frac{h_{0,1}(u^{\frac{2\mu}N})}{h_{1-w,1-w}(u^{\frac{2\mu}N})}\right)^{\ell} = \left(\frac{t \Delta}{N}\right)^\ell \left( e^{(-1)^w\frac{t \e \ell}{N}} + O\left(\frac1N\right) \right).
\]
These limits may be verified by considering the power series expansion of the logarithm as in \eqref{eq:loglim}.

Let us now consider the expressions appearing in the inner sums in the limit \eqref{eq:secondlimit}. For instance, for $w\in \{0,1\}$ and fixed $|\rho|=\ell$, from the expression
\[
  \left(\frac{h_{0,1}(u^{\frac{2\mu}N})}{h_{1-w,1-w}(u^{\frac{2\mu}N})}\right)^{|\rho|} \left(\frac{h_{w,w}(u^{\frac{2\mu}N})}{h_{1-w,1-w}(u^{\frac{2\mu}N})}\right)^{\alpha(\rho)},
\]
a direct computation gives
\begin{equation}
  \label{eq:simplifyeq}
  \left(\frac{t \Delta}{N}\right)^\ell  e^{\frac{(-1)^wt \e \ell}{N}} e^{2 (-1)^w t (k-3 - \lfloor\ell/2\rfloor - \varphi(\rho)) \e / N} +  O\left(\frac1N\right).
\end{equation}
In this case, since the function  $\varphi(\rho)$ appears in the part corresponding to the $\e$, we need to evaluate it using
multiple Riemann integral (see Section 3.4 of \cite{RW2020hk}).
Concretely, we see that 
\begin{align*}
  &\sum_{\substack{\bs \in \setC{k-1}{v}{w} \\ |\rho|=\ell}} \left(\frac{h_{1,1}(u^{\frac{2\mu}N})}{h_{0,0}(u^{\frac{2\mu}N})}\right)^{\alpha(\rho)} g_{k-1}(\bs)  R^{(k,N)}_\eta(\bs)  = \exp\left(-2(-1)^{\eta} \e \frac{t}{N}\right)\\
  &\quad \times \sum_{\substack{\bs \in \setC{k-1}{v}{w} \\ |\rho|=\ell}} \exp\left((-1)^\eta \frac{2t \varphi(\rho) \e}N\right)\exp\left(a_0^{(\eta)} + \sum_{m=0}^{k-4}  \sum_{j=1}^{k-3-m} (-1)^{(|\rho|+w)\delta_0(m) + \sum_{i=m}^{m+j-1} \rho_i} a_{m, m+j}^{(\eta)} \right).
\end{align*}
up to $O(\frac1N)$ terms. Then, the sum over $\setC{k-1}{v}{w} \simeq \Z_2^{k-3}$ with fixed $|\rho|=\ell$ is interpreted as a sum
over $j_1<j_2<\ldots < j_\ell \leq k-3$ by considering the position of the ones in the vectors $\rho$.

For \(\lambda \geq 1 \), define the functions
\begin{align*}
    f^{(\eta)}_\lambda(&z_1,z_2,\cdots,z_\lambda;u^{\frac1N}) =  (-1)^{\eta+1}  \frac{2\sqrt{2} g}{1-u^{2}} \sum_{\gamma=1}^{\lambda} (-1)^{\gamma-1} \Big[ x u^2 (1-u^{-2+\frac{z_{\lambda+1-\gamma}}N})(1-u^{-\frac{z_{\lambda+1-\gamma}}N}) \\
              &  \qquad \qquad \qquad \qquad \qquad \qquad \qquad \qquad \qquad \qquad \qquad - y u (1-u^{\frac{z_{\lambda+1-\gamma}}N})(1-u^{-\frac{z_{\lambda+1-\gamma}}N})  \Big] \\
              &  - \frac{2 g^2 }{1-u^2} u^{\frac{k}N}(1-u^{1-\frac{k}N})^2 \sum_{\gamma=1}^{\lambda} (-1)^{\gamma-1} (1-u^{\frac{z_{\lambda+1-\gamma}}N})(1-u^{-\frac{z_{\lambda+1-\gamma}}N}) \\
              & - \frac{2 g^2 }{1-u^2}\sum_{\substack{0\leq\alpha<\beta \\ \beta - \alpha \equiv 1 \pmod{2}  }}^{\lambda}  u^{\frac{z_{\beta+1}-z_{\alpha}}{N}}(1-u^{2 - \frac{z_{\beta+1}+z_{\beta}}N })(1-u^{\frac{z_\beta - z_{\beta+1}}N})(1-u^{\frac{z_\alpha - z_{\alpha+1}}{N}})(1-u^{\frac{z_\alpha+z_{\alpha+1}}N}),
\end{align*}
and
\[
  g^{(\eta)}_\lambda(z_1,z_2,\cdots,z_\lambda;\frac{t}{N}) = (-1)^\eta\frac{2 t \e}{N} \sum_{\gamma=1}^\lambda (-1)^{\gamma-1} z_{\lambda+1-\gamma}  
\]
where as before, we set \(z_0 := 0 \) and  \(z_{\lambda+1} := k-2 \). Notice that for fixed \( \lambda\),  \(f_\lambda^{(\eta)}(\bm{z};u^{\frac1N})\) and  $g^{(\eta)}_\lambda(z_1,z_2,\cdots,z_\lambda;\frac{t}{N})$ are smooth functions on \( z_i\), with \(i=1,2,\cdots,\lambda \), for any \( u \in (0,1) \).

To transform the sum into a multiple integral, we need the following result. The proof is done by using Riemann-Stieltjes integration as in the case of the QRM and we refer the reader to Section 3.4 of \cite{RW2020hk} for the details.
  
\begin{lem} \label{lem:sumint}
  For fixed \(\lambda \geq 1\) and \(a \in \Z_{\geq 1}\) with \(a \leq N \), we have
  \begin{align*}
    &  \sum_{1 \leq i_{1} < i_{2} < \cdots < i_{\lambda}}^{a} e^{f_\lambda^{(\eta)}(i_1,\cdots,i_\lambda;u^{\frac1N})+g^{(\eta)}_\lambda(i_1,\cdots,i_\lambda;\frac{t}{N}) } \\
    &\qquad \qquad \qquad=  \int_{0}^a \int_0^{z_\lambda} \cdots \int_0^{z_{2}} e^{{f_\lambda^{(\eta)}(z_1,\cdots,z_\lambda;u^{\frac1N})} +g^{(\eta)}_\lambda(z_1,\cdots,z_\lambda;\frac{t}{N})}  d \bm{z} + O(a^{\lambda-1}).
  \end{align*}
\end{lem}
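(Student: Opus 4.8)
The plan is to reduce the statement to a standard Riemann-sum estimate, the only real content being a uniform-in-$N$ regularity bound on the integrand. Write $F(\mathbf{z}):=\exp\!\big(f_\lambda^{(\eta)}(\mathbf{z};u^{1/N})+g_\lambda^{(\eta)}(\mathbf{z};t/N)\big)$. First I would record that, with $u\in(0,1)$ and $t,g,\epsilon,x,y,\lambda$ fixed and with every $z_i$ and the parameter $k$ lying in $[0,N]$, one has $F=O(1)$ and $\partial_{z_i}F=O(1/N)$ for each $i$, uniformly in $N$. This is immediate from the explicit formulas: every power of $u$ appearing in $f_\lambda^{(\eta)}$ has exponent bounded uniformly in $N$ (since all the ratios $z_i/N$ and $k/N$ are bounded), hence lies in a fixed compact subset of $(0,\infty)$; the remaining numerical coefficients are $O(1)$, and in $g_\lambda^{(\eta)}$ they are $O(1/N)$; and differentiating a factor $u^{c+\ell(\mathbf{z})/N}$ in $z_i$ brings down a factor $(\log u)\,\partial_{z_i}\ell/N=O(1/N)$. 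Consequently $f_\lambda^{(\eta)}+g_\lambda^{(\eta)}$ is bounded with gradient $O(1/N)$, and so is its exponential $F$. I would then prove, by induction on $m\ge 1$, the following statement, of which the lemma is the case $m=\lambda$, $H=F$: if $H\in C^1([0,N]^m)$ (possibly depending on $N$ and on auxiliary parameters in $[0,N]$) satisfies $H=O(1)$ and $\partial_{z_j}H=O(1/N)$ uniformly, then for integers $1\le a\le N$
\[
  \sum_{1\le i_1<\cdots<i_m\le a}H(\mathbf{i})=\int_{0<z_1<\cdots<z_m<a}H(\mathbf{z})\,d\mathbf{z}+O(a^{m-1}),
\]
with the implied constant depending only on those bounds.

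For $m=1$ this is the classical estimate: since $\int_0^a H=\sum_{i=1}^a\int_{i-1}^iH(z)\,dz$ and $\sum_{i=1}^aH(i)=\sum_{i=1}^a\int_{i-1}^iH(i)\,dz$, the difference equals $\sum_{i=1}^a\int_{i-1}^i(H(i)-H(z))\,dz$, which is bounded by $\tfrac12\sum_{i=1}^a\sup_{[i-1,i]}|H'|\le\tfrac12\,a\cdot O(1/N)=O(a/N)=O(1)$ because $a\le N$. For the inductive step I would peel off the outermost index, writing the sum as $\sum_{i_m=m}^a\big(\sum_{1\le i_1<\cdots<i_{m-1}<i_m}H(\mathbf{i})\big)$, and apply the inductive hypothesis to the inner sum with upper bound $i_m$ (treating $i_m\le N$ as an auxiliary parameter, the hypotheses being satisfied uniformly). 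This rewrites the inner sum as $\Phi(i_m)+O(i_m^{\,m-2})$, where $\Phi(w):=\int_{0<z_1<\cdots<z_{m-1}<w}H(z_1,\dots,z_{m-1},w)\,d\mathbf{z}'$; the error terms sum to $\sum_{i_m\le a}O(i_m^{\,m-2})=O(a^{m-1})$. It then remains to compare $\sum_{i_m=m}^a\Phi(i_m)$ with $\int_0^a\Phi(w)\,dw$, which by Fubini equals $\int_{0<z_1<\cdots<z_m<a}H(\mathbf{z})\,d\mathbf{z}$. By the Leibniz rule (the upper endpoint $w$ of the outermost integral also appears in the integrand, on the moving face $z_{m-1}=w$), $\Phi'(w)$ is the sum of $\int_{0<z_1<\cdots<z_{m-1}<w}\partial_wH\,d\mathbf{z}'=O(w^{m-2}/N)$ and the face term $\int_{0<z_1<\cdots<z_{m-2}<w}H(z_1,\dots,z_{m-2},w,w)\,d\mathbf{z}''=O(w^{m-2})$, so $\Phi'(w)=O(w^{m-2})$; the base-case estimate with $\sup_{[i-1,i]}|\Phi'|\le C\,i^{m-2}$ in place of the $O(1/N)$ bound then gives $\sum_{i_m=m}^a\Phi(i_m)=\int_0^a\Phi(w)\,dw+O\!\big(\sum_{i\le a}i^{m-2}\big)=\int_0^a\Phi(w)\,dw+O(a^{m-1})$. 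Collecting the error terms (the minor endpoint adjustments in the summation ranges are likewise $O(a^{m-1})$) and relabeling $w=z_m$ completes the induction.

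The argument is essentially bookkeeping, and it runs in parallel with the Riemann-sum lemma for the symmetric QRM in Section 3.4 of \cite{RW2020hk}; the only genuinely new ingredient is the term $g_\lambda^{(\eta)}$. The step I expect to require the most care is controlling the error constants through the induction — in particular making explicit that it is the accumulation of the moving boundary faces $\{z_{m-1}=z_m\}$ that produces the single lost power of $a$, i.e.\ the $O(a^{\lambda-1})$ remainder, whereas the $\partial_wH=O(1/N)$ contributions are of strictly lower order. For $g_\lambda^{(\eta)}$ itself nothing new occurs: from $z_j\le a\le N$ one gets $|g_\lambda^{(\eta)}|\le 2\lambda t|\epsilon|=O(1)$ and $\partial_{z_j}g_\lambda^{(\eta)}=\pm(-1)^\eta\tfrac{2t\epsilon}{N}=O(1/N)$, which are precisely the bounds fed into the induction.
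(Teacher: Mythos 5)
Your argument is correct and is in substance the same Riemann--sum comparison the paper relies on: the paper gives no proof here, deferring entirely to Section 3.4 of \cite{RW2020hk} (``Riemann--Stieltjes integration''), while you supply a self-contained inductive version resting on the same key inputs --- $F=O(1)$ and $\partial_{z_j}F=O(1/N)$ uniformly for $z_j,k\le N$, with the moving face $\{z_{m-1}=z_m\}$ accounting for the single lost power of $a$ in the $O(a^{\lambda-1})$ remainder. One harmless slip: the simplex integral of $\partial_w H$ is $O(w^{m-1}/N)$ rather than $O(w^{m-2}/N)$ (the $(m-1)$-simplex has volume $w^{m-1}/(m-1)!$), but since $w\le a\le N$ this is still $O(w^{m-2})$ and your estimate for $\Phi'(w)$, hence the induction, goes through unchanged.
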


With these preparations, the computation of the heat kernel of the AQRM can be completed by partitioning the sums
in \eqref{eq:secondlimit} according to the norm of $\rho$. The limit in a generic matrix component
of \eqref{eq:secondlimit} after the simplifications of this section is thus given by
\begin{align}
  &\frac{u^{\pm \e}}{2} \sum_{\lambda=0}^\infty \lim_{N \to \infty} \left( \frac{h_{0,1}(u^{\frac{2\mu}N})}{2 u^{\frac{\mu}N}} \right) \sum_{k \geq 2}^N J_0^{(k,N)}(x,y,u^{\frac{1}N},g) \exp\left(-2(-1)^{\eta} \e \frac{t}{N}\right)\\
  &\quad \times \sum_{\substack{\bs \in \Z_{2}^{k-3} \\ |\rho|=\lambda}} \exp\left((-1)^\eta \frac{2t \varphi(\rho) \e}N\right)\exp\left(a_0^{(\eta)} + \sum_{m=0}^{k-4}  \sum_{j=1}^{k-3-m} (-1)^{(\lambda+w)\delta_0(m) + \sum_{i=m}^{m+j-1} \rho_i} a_{m, m+j}^{(\eta)} \right),
\end{align}
for $\eta,v,w \in \{0,1\}$. Then, the innermost sum is replaced with a multiple integral using Lemma \ref{lem:sumint} and the final computation is obtained by evaluation of the Riemann sum in an straightforward way (see the proof of Theorem 4.2 in \cite{RW2020hk}). We note that the part of the sum corresponding to $g^{(\eta)}_\lambda(z_1,\cdots,z_\lambda;\frac{t}{N})$ is collected in $\eta_\lambda(\bm{\mu}_{\lambda},t)$ along with the additional term $2(-1)^{\eta}\e \frac{t}{N} $ appearing in \eqref{eq:simplifyeq}. This concludes the proof of Theorem \ref{thm:heat}.

Finally, we give the proof of Corollary \ref{cor:partition}, that is, the explicit expression for the partition
function of the AQRM. 

  \begin{proof}[Proof of Corollary \ref{cor:partition}]

    First, let us recall the identity
    \begin{equation*}
      Z^{(\e)}_{\text{R}}(\beta):=  \int_{-\infty}^\infty  \tr K^{(\e)}_{\text{R}}(x,x,\beta) dx.
    \end{equation*} -
    Directly by Theorem \ref{thm:heat}, we see that $\tr K^{(\e)}_{\text{R}}(x,x,t)$ is equal to
    \begin{align*}
   \frac{2 e^{g^2t} e^{-x^2 \frac{1-e^{-t}}{1+e^{-t}} }}{\sqrt{\pi (1-e^{-2t})}} 
  &\Bigg\{ e^{-2g^2\frac{1-e^{-t}}{1+e^{-t}}} \cosh\left(2\sqrt2 gx  \frac{1-e^{-t}}{1+e^{-t}}\right) \\
  &+  e^{-2g^2 \coth(\tfrac{t}2)} \sum_{\lambda =1}^{\infty} (t\Delta)^{2\lambda} \idotsint\limits_{0\leq \mu_1 \leq \cdots \leq \mu_{2 \lambda} \leq 1} e^{4g^2 \frac{\cosh(t(1-\mu_\lambda))}{\sinh(t)}+\xi_{2 \lambda}(\bm{\mu_{2\lambda}},t)}   \\
  &\times \cosh\left( \frac{2 \sqrt{2} g x}{1+e^{-t}} \sum_{\gamma=0}^{2\lambda} (-1)^{\gamma}  \left( e^{- t \mu_{\gamma} } - e^{ t( \mu_{\gamma}- 1)}  \right) + \e \left( \eta_\lambda(\bm{\mu_{\lambda}},t) + t\right) \right) d \bm{\mu_{2\lambda}}  \Bigg\}
    \end{align*}
    then the result follows from the elementary identity
    \[
      \int_{-\infty}^\infty e^{-\alpha x^2} \cosh(x \, \eta)dx = \sqrt{\frac{\pi}{\alpha}}e^{\frac{\eta^2}{4\alpha}}.
    \]
    valid for $\alpha>0$ and $\gamma,\eta \in \R$.
\end{proof}

\section*{Acknowledgements}

This work was supported by JST CREST Grant Numbers JPMJCR14D6 and JPMJCR2113, Japan.
The author would like to thank Nguyen Thi Hoai Linh for the useful comments on a early version of the manuscript.

\begin{flushleft}

\bigskip

 Cid Reyes-Bustos \par
  NTT Institute for Fundamental Mathematics,\\
  NTT Communication Science Laboratories, NTT Corporation \\
3-9-11, Midori-cho Musashino-shi, Tokyo, 180-8585, Japan \\
  \texttt{cid.reyes@ntt.com, math@cidrb.me}

\end{flushleft}


\begin{thebibliography}{99}

\bibitem{A2020}
  S.~Ashhab:
  \textit{Attempt to find the hidden symmetry in the asymmetric quantum Rabi model},
  Phys. Rev. A \textbf{101} (2020), 023808.

\bibitem{BdMZ2019}
  A.~Boutet de Monvel and L.~Zielinski:
  \textit{Oscillatory Behavior of Large Eigenvalues in Quantum Rabi Models}, Int. Math. Res. Notices, 
  DOI:10.1093/imrn/rny294, 2019 (first published online January 2019).
    
\bibitem{B2011PRL}
  D.~Braak:
  \textit{Integrability of the Rabi Model},
  Phys. Rev. Lett. \textbf{107} (2011), 100401.
  
\bibitem{B2013MfI}
  D.~Braak. 
  \textit{Analytical solutions of basic models in quantum optics},
  in ``Applications + Practical Conceptualization + Mathematics = fruitful Innovation, Proceedings of the Forum of Mathematics for Industry 2014" eds. R.~Anderssen, et al., 75-92, Mathematics for Industry \textbf{11}, Springer, 2016.
  
\bibitem{Calin2011}
  O.~Calin, D.-H.~Chang, K.~Furutani and C.~Iwasaki:
  Heat kernels for elliptic and sub-elliptic operators. Methods and techniques,
  Applied and Numerical Harmonic Analysis Series, Birkh\"auser, 2011.

\bibitem{Cecc2008}  
  T.~Ceccherini-Silberstein, F.~Scarabotti and F.~Tolli:
  Harmonic Analysis on Finite Groups: Representation Theory, Gelfand Pairs and Markov Chains,
  Cambridge Studies in Advanced Mathematics, Cambridge University Press, 2008. 

\bibitem{EE2012}
  E.~Elizalde:
  Ten Physical Applications of Spectral Zeta Functions,
  Lecture Notes in Physics 855, Springer Berlin, Heidelberg, 2012.
  
\bibitem{GD2013}
  B.~Gardas and J.~Dajka:
  \textit{New symmetry in the Rabi model},
  J. Phys. A: Math. Theor. \textbf{46} (2013), 265302.

\bibitem{IW2005a}
  T.~Ichinose and M.~Wakayama:
  \textit{Zeta functions for the spectrum of the non-commutative harmonic oscillators},
  Commun. Math. Phys. \textbf{258} (2005), 697--739.
  
\bibitem{IW2005b}
  T.~Ichinose and M.~Wakayama:
  \textit{Special values of the spectral zeta functions for the non-commutative harmonic oscillators and confluent Heun equations},
  Kyushu J. Math. \textbf{59} (2005), 39--100.
  
\bibitem{Kato1978}
  T.~Kato: 
  \textit{Trotter's product formula for an arbitrary pair of self-adjoint contraction semigroups}, Topics in functional analysis (essays dedicated to M.~G.~Kre\u\i n on the occasion of his 70th birthday), Adv. Math. Suppl. Stud., 3, Boston, MA: Academic Press, (1978) 185--195.


\bibitem{KRW2017}
  K.~Kimoto, C.~Reyes-Bustos and M.~Wakayama:
  \textit{Determinant expressions of constraint polynomials and degeneracies of the asymmetric quantum Rabi model}.
  Int. Math. Res. Notices (2020), Published online 20 April 2020.

 \bibitem{KW2019}
  K.~Kimoto and M.~Wakayama:
 \textit{Ap\'ery-like numbers for non-commutative harmonic oscillators and automorphic integrals},
  Ann.  l'Inst. Henri Poincar\'e - D, 2022 (Published Online: December 2022).
  
\bibitem{LB2015JPA}
  Z.-M.~Li and M.T.~Batchelor:
  \textit{Algebraic equations for the exceptional eigenspectrum of the generalized Rabi model},
  J. Phys. A: Math. Theor. \textbf{48} (2015), 454005.g
  
\bibitem{MBB2020}
  V.~V.~Mangazeev, M.~T.~Batchelor and V.~V.~Bazhanov:
  \textit{The hidden symmetry of the asymmetric quantum Rabi model},
  J. Phys. A: Math. Theor. \textbf{54} (2021), 12LT01.
  
\bibitem{ni2010}
  T.~Niemczyk, {\it et al.}. 
  \textit{Beyond the Jaynes-Cummings model: circuit QED in the ultrastrong coupling regime},
  Nat. Phys. \textbf{6} (2010), 772-776.

\bibitem{RBW2021}
  C.~Reyes-Bustos, D.~Braak and M.~Wakayama:
  \textit{Remarks on the hidden symmetry of the asymmetric quantum Rabi model},
  J. Phys. A: Math. Theor. \textbf{54} (2021), 285202. 
  
\bibitem{RW2020hk}
  C.~Reyes-Bustos and M.~Wakayama:
  \textit{Heat kernel for the quantum Rabi model}, 
  Adv. Theor. Math. Phys. \textbf{26} (5), (2022), 1347-1447.

\bibitem{RW2020z}
  C.~Reyes-Bustos and M.~Wakayama:
  \textit{Heat kernel for the quantum Rabi model II: Propagators and spectral determinants},
  J. Phys. A: Math. Theor. \textbf{54} (2021), 115202.

\bibitem{RBW2022}
  C.~Reyes-Bustos and M.~Wakayama:
  \textit{Degeneracy and hidden symmetry for the asymmetric quantum Rabi model with integral bias},
  Comm. Numb. Theor. Phys. \textbf{16} (3),  (2022), 615-672.
    
\bibitem{Sugi2016}
  S.~Sugiyama:
  \textit{Spectral zeta functions for the quantum Rabi models},
  Nagoya Math.~J. \textbf{229} (2018), 52--98.
  

\bibitem{W2016JPA}
  M.~Wakayama. 
  \textit{Symmetry of asymmetric quantum Rabi models}. J. Phys. A: Math. Theor.
  \textbf{50} (2017), 174001 (22pp).
  
\bibitem{WKB2012}
  F.A.~Wolf, M.~Kollar, and D.~Braak:
  \textit{Exact real-time dynamics of the quantum Rabi model}.
  Phys. Rev. A \textbf{85} (2012), 053817.

\bibitem{XZBL2017}
  Q.-T.~Xie, H.-H.~Zhong, M.T.~Batchelor and C.-H.~Lee:
  \textit{The quantum Rabi model: solution and dynamics},
  J. Phys. A: Math. Theor. 
  \textbf{50} (2017), 113001.    
  
 \bibitem{YS2018}
F.~Yoshihara, T.~Fuse, Z.~Ao, S.~Ashhab, K.~Kakuyanagi, S.~Saito, T.~Aoki, K.~Koshino, and K.~Semba:  
\textit{Inversion of Qubit Energy Levels in Qubit-Oscillator Circuits in the Deep-Strong-Coupling Regime}, 
Phys. Rev. Lett. \textbf{120} (2018), 183601.
  
\end{thebibliography}
\end{document}